\documentclass[a4paper,UKenglish,cleveref,autoref,thm-restate]{lipics-v2021}
\pdfoutput=1

\title{Approximating the Value of Energy-Parity Objectives in Simple Stochastic Games}
\titlerunning{Approximating the Value of Energy-Parity Games}

\author{Mohan Dantam}{School of Informatics, University of Edinburgh, UK}{}{}{}
\author{Richard Mayr}{School of Informatics, University of Edinburgh, UK}{}{}{}

\authorrunning{M.~Dantam and R.~Mayr}
\Copyright{Mohan Dantam and Richard Mayr}

\ccsdesc{Computing methodologies~Stochastic games}

\keywords{Energy-Parity Games, Simple Stochastic Games, Parity, Energy}

\relatedversiondetails{Full version of a paper presented at MFCS 2023.}{}

\usepackage[colorinlistoftodos,prependcaption,textsize=tiny,obeyDraft]{todonotes}

\usepackage{amsmath,amssymb,amsthm}
\usepackage{mathtools}
\usepackage{thmtools} 
\usepackage{xcolor,pgf,tikz}
\usepackage{comment}
\usepackage{hyperref}
\hypersetup{colorlinks,linkcolor=blue,citecolor=blue,urlcolor=blue}
\usepackage{cleveref}
\usepackage{thm-restate}
\usepackage{array}
\usepackage{makecell}
\usepackage{stmaryrd}
\usepackage[ruled,vlined]{algorithm2e}
\usepackage{lmodern}
\usepackage{anyfontsize}
\usetikzlibrary{arrows.meta,automata,shapes,decorations,positioning,trees,calc}

\DeclareMathOperator{\counter}{\mathtt{Energy}}
\DeclareMathOperator{\runs}{\textit{Runs}}
\DeclareMathOperator{\val}{Val}
\DeclareMathOperator{\Term}{\mathtt{Term}}
\DeclareMathOperator{\FReach}{FReach}
\DeclareMathOperator{\Reach}{Reach}
\DeclareMathOperator{\Gain}{\mathtt{Gain}}
\DeclareMathOperator{\Loss}{\mathtt{Loss}}
\DeclareMathOperator{\parity}{\mathtt{PAR}}
\newcommand{\Eparity}{\mathtt{EPAR}}
\newcommand{\Oparity}{\mathtt{OPAR}}
\newcommand{\LimInf}[2]{\mathtt{LimInf}(#1 #2)}
\newcommand{\LimSup}[2]{\mathtt{LimSup}(#1 #2)}
\newcommand{\MP}[2]{\mathtt{MP}(#1\, #2)}
\newcommand{\energymove}[1]{\stackrel{#1}{\movesto}}
\newcommand{\en}{\mathtt{EN}}
\newcommand{\tendsto}{\rightarrow}

\DeclarePairedDelimiter{\set}{\{}{\}}
\DeclarePairedDelimiter{\ceil}{\lceil}{\rceil}


\newcommand{\+}[1]{\mathbb{#1}}
\newcommand{\?}[1]{\mathcal{#1}}

\newcommand{\N}{\+{N}}

\newcommand{\R}{\+{R}}
\newcommand{\x}{\times}
\newcommand{\Ocompl}{\?{O}}

\usepackage{wasysym} %
\usepackage{xparse}
\usepackage{xifthen}
\newcommand{\rsymbol}{\ocircle}
\newcommand{\zsymbol}{\Box}
\newcommand{\osymbol}{\Diamond}
\newcommand{\xsymbol}{\odot}
\newcommand{\zstates}{\states_\zsymbol}
\newcommand{\rstates}{\states_\rsymbol}
\newcommand{\ostates}{\states_\osymbol}
\newcommand{\xstates}{\states_\xsymbol}
\newcommand{\reachset}{T}

\newcommand{\win}{\mathrm{win}}
\newcommand{\lose}{\mathrm{lose}}

\newcommand{\abs}[1]{\lvert#1\rvert}

\newcommand{\size}[1]{\abs{#1}}

\newcommand{\eqby}[2][=]{\stackrel{\text{{\tiny{#2}}}}{#1}}
\newcommand{\eqdef}{\eqby{def}}

\usepackage{bm}

\newcommand{\eps}{\varepsilon}

\newcommand{\NP}{\ensuremath{\mathsf{NP}}}
\newcommand{\coNP}{\ensuremath{\mathsf{coNP}}}

\newcommand{\NEXPTIME}{\ensuremath{\mathsf{NEXPTIME}}}

\newcommand{\problemx}[3]{
\par\noindent\underline{\sc#1}\par\nobreak\vskip.2\baselineskip
\begingroup\clubpenalty10000\widowpenalty10000
\setbox0\hbox{\bf INPUT:\ }\setbox1\hbox{\bf QUESTION:\ }
\dimen0=\wd0\ifnum\wd1>\dimen0\dimen0=\wd1\fi
\vskip-\parskip\noindent
\hbox to\dimen0{\box0\hfil}\hangindent\dimen0\hangafter1\ignorespaces#2\par
\vskip-\parskip\noindent
\hbox to\dimen0{\box1\hfil}\hangindent\dimen0\hangafter1\ignorespaces#3\par
\endgroup}

\newcommand{\Prob}[2][]{\mathcal{P}^{#1}_{#2}}

\NewDocumentCommand{\valueof}{O{} O{} m}{{\mathtt{val}^{#1}_{#2}\lrc{#3}}}
\NewDocumentCommand{\limval}{O{} O{} m}{{\mathtt{Lval}^{#1}_{#2}\lrc{#3}}}
\NewDocumentCommand{\ST}{O{} O{}}{\ifthenelse{\isempty{#2}}{\ifthenelse{\isempty{#1}}{\mathtt{ST}}{\mathtt{ST}\lrc{#1}}}{\mathtt{ST}\lrc{#1,#2}}}
\NewDocumentCommand{\winset}{O{} O{}}{\IfNoValueTF{#2}{\IfNoValueTF{#1}{\mathtt{Win}}{\mathtt{Win}_{#1}}}{\IfNoValueTF{#1}{\mathtt{Win}\lrc{#2}}{\mathtt{Win}_{#1}\lrc{#2}}}}
\newcommand{\dist}{\mathcal{D}}
\newcommand{\supp}{{\sf supp}}

\newcommand{\always}{{\sf G}}
\newcommand{\eventually}{{\sf F}}
\renewcommand{\next}{{\sf X}}
\newcommand{\obj}{\mathtt{O}}

\newcommand{\hide}[1]{}

\newcommand{\lrb}[1]{\left[#1\right]}
\newcommand{\lrc}[1]{\left(#1\right)}
\newcommand{\lrd}[1]{\{#1\}}

\newcommand{\ignore}[1]{}

\newcommand{\nat}{\mathbb N}

\newcommand{\setcomp}[2]{\lrd{{#1}|\;{#2}}}
\newcommand{\tuple}[1]{\lrc{#1}}

\newcommand{\game}{{\mathcal G}}
\newcommand{\mdp}{{\mathcal M}}

\newcommand{\gametuple}{\tuple{\states,(\zstates,\ostates,\rstates),\transition,\probp}}
\newcommand{\gametupleprime}{\tuple{\states',(\zstates',\ostates',\rstates'),\transition',\probp'}}
\newcommand{\mdptuple}{\tuple{\states,\zstates,\rstates,\transition,\probp}}
\newcommand{\mdptupleprime}{\tuple{\states',\zstates',\rstates',\transition',\probp'}}

\newcommand{\states}{S}

\newcommand{\s}{s}

\newcommand{\transition}{{E}}

\newcommand{\movesto}{{\longrightarrow}}

\newcommand{\probp}{P}

\newcommand{\complementof}[1]{\overline{#1}}

\newcommand{\play}{\rho}

\newcommand{\playsof}[1]{{\it Plays}\lrc{#1}}
\newcommand{\partialplay}{\rho}

\newcommand{\zstrat}{\sigma}
\newcommand{\ostrat}{\pi}
\newcommand{\xstrat}{\tau}
\newcommand{\optzstrat}{\zstrat^*}
\newcommand{\optostrat}{\ostrat^*}

\newcommand{\zallstrats}[1]{\zstratset_{{#1}}}
\newcommand{\oallstrats}[1]{\ostratset_{{#1}}}
\newcommand{\zstratset}{\Sigma}
\newcommand{\ostratset}{\Pi}
\newcommand{\px}{\xsymbol}
\newcommand{\pz}{\zsymbol}
\newcommand{\po}{\osymbol}

\newcommand{\memory}{{\sf M}}
\newcommand{\updatefun}{upd}
\newcommand{\memconf}{{\sf m}}
\newcommand{\memconfset}{{\sf M}}
\newcommand{\memsuc}{{\sf nxt}}

\newcommand{\memup}{{\sf \updatefun}}
\newcommand{\memstrattuple}{\tuple{\memory,\initmem,\memup,\memsuc}}
\newcommand{\initmem}{\memconf_0}
\newcommand{\memstrat}[1]{{\sf T}^{#1}}

\newcommand{\om}{\omega}

\newcommand{\coloring}{{\mathit{C}ol}}

\mathchardef\mhyphen="2D %

\newcommand{\F}{{\mathcal F}}




\newcommand{\EN}[1]{\mathsf{EN}(#1)}
\newcommand{\ES}[1]{\mathsf{ST}(#1)}
\newcommand{\AS}[2][]{\mathsf{AS}^{#1}\left(#2\right)}

\newcommand{\successors}[1]{\mathsf{Succ}({#1})}


\nolinenumbers


\EventEditors{J\'{e}r\^{o}me Leroux, Sylvain Lombardy, and David Peleg}
\EventNoEds{3}
\EventLongTitle{48th International Symposium on Mathematical Foundations of Computer Science (MFCS 2023)}
\EventShortTitle{MFCS 2023}
\EventAcronym{MFCS}
\EventYear{2023}
\EventDate{August 28 to September 1, 2023}
\EventLocation{Bordeaux, France}
\EventLogo{}
\SeriesVolume{272}
\ArticleNo{36}

\begin{document}

\maketitle

\begin{abstract}
We consider simple stochastic games $\game$ with energy-parity objectives,
a combination of quantitative rewards with a qualitative parity condition.
The Maximizer tries to avoid running out of energy while simultaneously
satisfying a parity condition.
  
We present an algorithm to approximate the value of a given configuration
in 2-\NEXPTIME. Moreover, $\eps$-optimal strategies for either player
require at most ${\Ocompl}\lrc{2\textrm{-}\mathsf{EXP}\lrc{\size{\game}} \cdot\log\lrc{\frac{1}{\eps}}}$
memory modes.
\end{abstract}

\section{Introduction}\label{sec:intro}

{\bf Background.}
\emph{Simple stochastic games} (SSGs)
are 2-player turn-based perfect information stochastic games played on finite graphs.
They are also called \emph{competitive Markov decision processes} \cite{Filar_Vrieze:book},
or \emph{$2\frac{1}{2}$-player games} \cite{CJH2004,CJH2003}.
Introduced by Shapley \cite{shapley1953stochastic} in 1953, they
have since played a central role in the solution of many problems, e.g.,
synthesis of reactive systems
\cite{ramadge1987supervisory,pnueli1989synthesis}
and formal specification and verification
\cite{de2001interface,dill1989trace,alur2002alternating}.
Every state either belongs to one of the players (Maximizer or Minimizer)
or is a random state. In each round of the game the player who owns the current
state gets to choose the successor state along the game graph.
For random states the successor is chosen according to a predefined distribution.
Given a start state and strategies of Maximizer and Minimizer, this yields a
distribution over induced infinite plays.
We consider objectives $\obj$ that are measurable subsets of the set of possible plays, and the players
try to maximize (resp.\ minimize) the probability of $\obj$.

Many different objectives for SSGs have been studied in the literature.
Here we focus on parity, mean-payoff and energy objectives.
We assign numeric rewards to transitions and priorities
(aka colors), encoded by bounded non-negative numbers, to states.
A play satisfies the (min-even) {\em parity objective} iff
the minimal priority that appears infinitely often in a play is even.
It subsumes all $\omega$-regular objectives, and in particular safety,
liveness, fairness, etc.
On finite SSGs, the parity objective can be seen as a special case of
the {\em mean-payoff objective} which requires the limit average reward per
transition along a play to be positive (or\ non-negative).
Mean-payoff objectives in SSGs
go back to a 1957 paper by Gillette \cite{gillette1957stochastic} and have
been widely studied, due to their relevance for efficient control.
The {\em energy objective} \cite{chakrabarti2003resource} requires that the
accumulated reward at any time in a play stays above some finite threshold.
The intuition is that a controlled system has some finite initial energy level
that must never become depleted.
Since the accumulated reward is not bounded a-priori, this essentially turns a finite-state game into an infinite-state one.

{\bf Energy-parity.}
We consider SSGs with \emph{energy-parity} objectives,
where plays need to satisfy both an energy and a parity objective.
The parity objective specifies functional correctness, while the energy
condition can encode efficiency or risk considerations, e.g., the system
should not run out of energy since manually recharging would be costly or risky.

{\bf Previous work.}
Much work on combined objectives for stochastic systems is 
restricted to Markov decision processes (MDPs)
\cite{CD2011,chatterjee2011games,BKN2016,MSTW2017}.

For (stochastic) games, the computational complexity of single
objectives is often in $\NP\cap \coNP$, e.g., for parity or mean-payoff
objectives \cite{jurdzinski1998deciding}.
Multi-objective games can be harder, e.g.,
satisfying \emph{two different} parity objectives 
leads to $\coNP$ completeness \cite{DBLP:conf/fossacs/ChatterjeeHP07}.

Stochastic mean-payoff parity games can be solved in $\NP \cap \coNP$
\cite{CDGQ:2014}. However, this does not imply a solution for
stochastic energy-parity games, since, unlike in the non-stochastic
case \cite{CD2010}, there is no known reduction from energy-parity to mean-payoff parity
in stochastic games.
The reduction in \cite{CD2010} relies on the fact that
Maximizer has a winning \emph{finite-memory} strategy for energy-parity, which does
not generally hold for stochastic games, or even MDPs \cite{MSTW2017}.
For the same reason, the direct reduction from stochastic energy-parity to ordinary
energy games proposed in \cite{CD2011,chatterjee2011games} does not work
for general energy-parity but only for energy-B\"uchi; cf.~\cite{MSTW2017}.

Non-stochastic energy-parity games can be solved in $\NP \cap \coNP$
(and even in pseudo-quasi-polynomial time \cite{daviaud2018pseudo})
and Maximizer strategies require only finite (but exponential) memory \cite{CD2010}.

Stochastic energy-parity games have been studied in \cite{MSTW2021},
where it was shown that the almost-sure problem is decidable and in $\NP\cap \coNP$.
That is, given an initial configuration (control-state plus current energy level),
does Maximizer have a strategy to ensure
that energy-parity is satisfied with probability $1$ against any Minimizer
strategy?
Unlike in many single-objective games, such an almost-surely winning Maximizer strategy
(if it exists) requires infinite memory in general.
This holds even in MDPs and for energy-coB\"uchi objectives \cite{MSTW2017}.

However, \cite{MSTW2021} did not address quantitative questions about energy-parity
objectives, such as computing/approximating the value of a given
configuration, or the decidability of exact questions like
``Is the value of this configuration $\ge k$ ?'' for some constant $k$ (e.g., $k=1/2$).

The decidability of the latter type of exact question about the energy-parity value is open,
but there are strong indications that it is very hard.
In fact, even simpler subproblems are already at least as hard as the
\emph{positivity problem for linear recurrence sequences},
which in turn is at least as hard as the \emph{Skolem problem}
\cite{Everest:2003}. (The decidability of these problems has been open for
decades; see \cite{OW:2015} for an overview.)
Given an SSG with an energy-parity objective, suppose we remove the parity
condition (assume it is always true) and also suppose that Maximizer is passive
(does not get to make any decisions). Then we obtain an MDP where the only
active player (the Minimizer in the SSG) has a \emph{termination objective},
i.e., to reach a configuration where the energy level is $\le 0$.
Exact questions about the value of the termination objective in MDPs are
already at least as hard as the positivity problem
\cite[Section 5.2.3]{Piribauer:2021} (see also
\cite{Piribauer-Baier:2020,Piribauer-Baier:2023}).
Thus exact questions about the energy-parity value in SSGs are also
at least as hard as the positivity problem.

{\bf Our contributions.}
Since exact questions about the energy-parity value in SSGs are
positivity-hard,
we consider the problem of computing approximations of the value.
We present an algorithm that, given an SSG $\game$ and error $\eps$,
computes $\eps$-close approximations
of the energy-parity value of any given configuration in 2-\NEXPTIME.
Moreover, we show that $\eps$-optimal Maximizer (resp.\ Minimizer)
strategies can be chosen as deterministic and using only finite memory
with ${\Ocompl}\lrc{2\textrm{-}\mathsf{EXP}\lrc{\size{\game}} \cdot\log\lrc{\frac{1}{\eps}}}$
memory modes.
One can understand the idea as a constructive upper bound on the accuracy
with which the players need to remember the current energy level in
the game.
(This is in contrast to the result in \cite{MSTW2017} that
almost-surely winning Maximizer strategies require infinite
memory in general.)
Once the upper bound on Maximizer's memory for $\eps$-optimal strategies
is established, one might attempt a reduction from energy-parity to
mean-payoff parity, along similar lines as for non-stochastic games in
\cite{CD2010}. However, instead we use a more direct reduction
from energy-parity to parity in a derived SSG for our approximation algorithm.

\section{Preliminaries}\label{sec:prelim}

A \textit{probability distribution} over a countable set $S$ is a function
\mbox{$f: S \to [0,1]$} with \mbox{$\sum_{s \in S} f(s) = 1$}.
$\supp(f) \eqdef$ \mbox{$\setcomp{s}{f(s)>0}$} denotes the support of
$f$ and $\dist(S)$ is the set of all probability distributions over $S$.
Given an alphabet $\Sigma$,
let $\Sigma^{\om}$ and $\Sigma^{*}$ ($\Sigma^+$) denote the set of infinite
and finite (non-empty) sequences over $\Sigma$, respectively.
Elements of $\Sigma^{\om}$ or $\Sigma^*$ are called words.

\textbf{Games, MDPs and Markov chains.}
A \emph{Simple Stochastic Game} (SSG) is a finite-state 2-player turn-based
perfect-information stochastic game $\game = \gametuple$
where the finite set of states $\states$ is partitioned into the states $\zstates$ of
the player $\pz$ (\emph{Maximizer}),
states $\ostates$ of player $\po$ (\emph{Minimizer}),
and chance vertices (aka random states) $\rstates$.
Let $\transition \subseteq \states \x \states$ be the transition relation.
We write $\s \movesto \s'$ if $\tuple{\s,\s'} \in \transition$ and
assume that
$\successors{\s} \eqdef \{\s' \mid \s\transition{}\s'\} \neq \emptyset$
for every state $\s$.
The \emph{probability function}~$\probp$
assigns each random state $\s \in \rstates$ a distribution over
its successor states, i.e., $\probp(\s) \in \dist(\successors{\s})$.
For ease of presentation, we extend the domain of $\probp$ to
$\states^*\rstates$ by $\probp(\partialplay\s) \eqdef \probp(\s)$
for all $\partialplay\s \in \states^+\rstates$.
An \emph{MDP} is a game where one of the two players does not control any
states. An MDP is \emph{maximizing} (resp.\ \emph{minimizing})
iff $\ostates = \emptyset$ (resp.\ $\zstates = \emptyset$).
A \emph{Markov chain} is a game with only random states,
i.e., $\zstates = \ostates = \emptyset$.

\textbf{Strategies.}
A \textit{play} is an infinite sequence $\s_0\s_1 \ldots \in \states^{\omega}$
such that $\s_i \movesto \s_{i+1}$ for all $i \ge 0$.
A \textit{path} is a finite prefix of a play.
Let $\playsof{\game} \eqdef \set*{\play = \lrc{q_i}_{i \in \N} \, | q_i \movesto q_{i+1}}$
denote the set of all possible plays.
A strategy of the player $\pz$ ($\po$) is a function
$\zstrat : \states^* \zstates \to \dist(\states)$
($\ostrat : \states^* \ostates \to \dist(\states)$)
that assigns to every path
$w\s \in \states^* \zstates$ ($\in \states^* \ostates$)
a probability distribution over the successors of $\s$.
If these distributions are always Dirac then the strategy is called
\emph{deterministic} (aka pure), otherwise it is called \emph{randomized}
(aka mixed).
The set of all strategies of player $\pz$ and $\po$ in
$\game$ is denoted by $\zallstrats{\game}$ and $\oallstrats{\game}$,
respectively.
A play/path $\s_0\s_1 \ldots$ is compatible with a pair of strategies
$(\zstrat,\ostrat)$ if $\s_{i+1} \in \supp(\zstrat(\s_0 \ldots \s_i))$
whenever $\s_i \in \zstates$ and
$\s_{i+1} \in \supp(\ostrat(\s_0 \ldots \s_i))$ whenever $\s_i \in \ostates$.

Finite-memory deterministic (FD) strategies are a subclass of
strategies described by deterministic transducers
$\memstrat{} =\memstrattuple$ where $\memconfset$ is a finite set
of memory modes with initial mode $\memconf_0$,
$\memup : \memconfset \x \transition \mapsto \memconfset$
updates the memory mode upon observing a transition and
$\memsuc : \memconfset \x \xstates \mapsto \states$
chooses the successor state based on the current memory mode and state.
FD strategies without memory ($|\memconfset|=1$) are
called memoryless deterministic (MD).
For deterministic strategies, there is no difference between public memory
(observable by the other player) and private memory.

\textbf{Measure.} A game $\game$ with initial state $\s_0$ and strategies
$(\zstrat,\ostrat)$ yields a probability space
$(\s_0\states^{\om},\F_{\s_0}, \Prob[\game]{\zstrat,\ostrat,\s_0})$
where $\F_{\s_0}$ is the $\sigma$-algebra generated by the cylinder sets
$\s_0\s_1\ldots\s_n\states^{\om}$ for $n \ge 0$.
The probability measure $\Prob[\game]{\zstrat,\ostrat,\s_0}$ is
first defined on the cylinder sets.
For $\partialplay = \s_0\ldots\s_n$, let
$\Prob[\game]{\zstrat,\ostrat,\s_0}(\partialplay) \eqdef 0$ if
$\partialplay$ is not compatible with $\zstrat,\ostrat$ and
otherwise 
$\Prob[\game]{\zstrat,\ostrat,\s_0}(\partialplay\states^{\om}) \eqdef
\prod_{i=0}^{n-1} \xstrat(\s_0\ldots\s_i)(\s_{i+1}) $ where $\xstrat$
is $\zstrat$ or $\ostrat$ or $\probp$ depending on
whether $\s_i \in \zstates$ or $\ostates$ or $\rstates$, respectively.
By Carath\'eodory's extension
theorem~\cite{billingsley2008probability}, this defines a unique probability
measure on the $\sigma$-algebra.

\textbf{Objectives and Payoff functions.}
General objectives are defined by real-valued measurable functions.
However, we only consider indicator functions of measurable sets.
Hence our objectives can be described by measurable subsets $\obj \subseteq \states^{\om}$ of plays.
The payoff, under strategies $(\zstrat,\ostrat)$,
is the probability that plays belong to $\obj$.

We use the syntax and semantics of the LTL operators~\cite{CGP:book}
\mbox{$\eventually$ (eventually)}, $\always$ (always) and $\next$ (next)
to specify some conditions on plays.

\textit{Reachability \& Safety.}
A reachability objective is defined by a set of target states
$\reachset \subseteq \states$. A play $\play = \s_0s_1 \ldots$
belongs to $\eventually\,\reachset$ iff $\exists i \in \nat\, \s_i \in \reachset$.
Similarly, $\play$ belongs to $\eventually^{\le n} \reachset$
(resp.\ $\eventually^{\ge n} \reachset$) iff
$\exists i \le n$ (resp.\ $i \ge n$) such that $\s_i \in \reachset$.
Dually, the safety objective
$\always\,\reachset$ consists of all plays
which never leave $\reachset$. We have $\always\,\reachset = \neg\eventually\neg\reachset$.

\textit{Parity.} A parity objective is defined via bounded function
$\coloring: \states \to \nat$ that assigns non-negative priorities
(aka colors) to states. Given an infinite play
$\play = \s_0s_1 \ldots$, let $\text{Inf}(\play)$
denote the set of numbers that occur infinitely often in the sequence
$\coloring(\s_0)\coloring(\s_1)\ldots$.
A play $\play$ satisfies \textit{even parity}
w.r.t.\ $\coloring$ iff the minimum of $\text{Inf}(\play)$ is even.
Otherwise, $\play$ satisfies \textit{odd parity}.
The objective even parity is denoted by $\Eparity(\coloring)$
and odd parity is denoted by $\Oparity(\coloring)$.
Most of the time, we implicitly assume that the coloring function is known
and just write $\Eparity$ and $\Oparity$.
Observe that, given any coloring $\coloring$, we have
$
\complementof{\Eparity} = \Oparity$ and
$
\Oparity(\coloring) = \Eparity(\coloring + 1) 
$
where $\coloring + 1$ is the function which adds $1$ to the color of every
state. This justifies to consider only one of the even/odd parity objectives,
but, for the sake of clarity, we distinguish these objectives wherever necessary.

\textit{Energy/Reward/Counter based objectives.}
Let $r: E \to \set{-R,\dots,0,\dots,R}$ be a bounded function that assigns weights
to transitions. Depending on context, the sum of these weights in a path
can be viewed as energy, cost/reward or a counter.
If $\s \movesto \s'$ and $r((\s,\s')) = c$, we write $\s \energymove{c}
\s'$. Let $\play = \s_0 \energymove{c_0} \s_1 \energymove{c_1} \ldots$ be a play.
We say that $\play$ satisfies
\begin{enumerate}
\item
  the $k$-\textit{energy} objective $\EN{k}$ iff $\lrc{k + \sum_{i=0}^{n-1} c_i} > 0$ for all $n \ge 0$.
\item
  the \emph{$l$-storage condition} if $l+\sum_{i=m}^{n-1} c_i \ge 0$
holds for every infix $s_m \energymove{c_m} \s_{m+1}\ldots s_n$ of the play.
Let $\ES{k,l}$ denote the set of plays that satisfy both the $k$-energy
and the $l$-storage condition. Let $\ES{k} \eqdef \bigcup_l \ES{k,l}$. Clearly, $\ES{k} \subseteq \EN{k}$.
\item
  $k$-\textit{Termination} $\Term(k)$ iff there exists $n \ge 0$ such that $\lrc{k + \sum_{i=0}^{n-1} c_i} \le 0$.
\item
  \textit{Limit} objective $\LimInf{\rhd}{z}$ iff
      $\lrc{\liminf_{n \tendsto \infty}\sum_{i=0}^{n-1} c_i } \rhd\; z$
      for $\rhd \in \set{<,\le,=,\ge,>}$ and $z \in \R \cup \set{\infty,-\infty}$ and similarly for $\LimSup{\rhd}{z}$.
\item
      \textit{Mean payoff} $\MP{\rhd}{c}$ for some constant $c \in \R$
      iff $\lrc{\liminf_{n \tendsto \infty}\frac{1}{n}\sum_{i=0}^{n-1} c_i } \rhd\; c$.
\end{enumerate}
Observe that the objectives $k$-energy and $k$-termination are mutually
exclusive and cover all of the plays.
A different way to consider these objectives is to encode the energy level
(the sum of the transition weights so far) into the state space and
then consider the obtained infinite-state game with safety/reachability objective, respectively.

An objective $\obj$ is called \textit{shift-invariant}
iff for all finite paths $\partialplay$ and plays $\play' \in \states^{\omega}$,
we have $\partialplay \play' \in \obj \iff \play' \in \obj$.
Parity and mean payoff objectives are shift-invariant, but energy and
termination objectives are not.
Objective $\obj$ is called \textit{submixing} iff for all sequences of finite
non-empty words $u_0$, $v_0$, $u_1$, $v_1 \ldots$ we have
$u_0v_0u_1v_1 \ldots \in \obj \implies \left((u_0u_1\ldots \in \obj) \vee (v_0v_1 \ldots \in \obj)\right)$.

\textbf{Determinacy.}
Given an objective $\obj$ and a game $\game$, state $\s$ has value (w.r.t to $\obj$) iff 
$$\sup_{\zstrat \in \zallstrats{\game}}\inf_{\ostrat \in \oallstrats{\game}} \Prob[\game]{\zstrat,\ostrat,\s}(\obj) = \inf_{\ostrat \in \oallstrats{\game}}\sup_{\zstrat \in \zallstrats{\game}} \Prob[\game]{\zstrat,\ostrat,\s}(\obj).$$
If $\s$ has value then $\valueof[\game][\obj]{\s}$ denotes the value of
$\s$ defined by the above equality. A game with an objective is called
\textit{weakly determined} if every state has value.
Stochastic games with Borel objectives are weakly
determined~\cite{Maitra-Sudderth:2003,M1998}.
Our objectives above are Borel, hence any boolean combination of them is also
weakly determined.
For $\eps > 0$ and state $\s$, a strategy
\begin{enumerate}
    \item $\zstrat \in \zallstrats{\game}$ is $\eps$-optimal (maximizing) iff $\Prob[\game]{\zstrat,\ostrat,\s}(\obj) \ge \valueof[\game][\obj]{\s} - \eps$ for all $\ostrat \in \oallstrats{\game}$.
    \item $\ostrat \in \oallstrats{\game}$ is $\eps$-optimal (minimizing) iff $\Prob[\game]{\zstrat,\ostrat,\s}(\obj) \le \valueof[\game][\obj]{\s} + \eps$ for all $\zstrat \in \zallstrats{\game}$.
\end{enumerate}
A $0$-optimal strategy is called \emph{optimal}.
An MD strategy is called \emph{uniformly} $\eps$-optimal (resp.\ uniformly optimal)
if it is so from every start state.
An optimal strategy for player $\pz$ from state $\s$
is \emph{almost surely} winning if $\valueof[\game][\obj]{\s}=1$.
By $\AS{\obj}$ we denote the set of states that have an almost surely winning
strategy for objective $\obj$. For ease of presentation, we drop subscripts and superscripts wherever possible if they are clear from the context.
     
%

\textbf{Energy-parity.}
We are concerned with approximating the value for the combined energy-parity
objective $\en(k) \cap \Eparity$ and building $\eps$-optimal strategies.

In our constructions we use some auxiliary objectives.
Following~\cite{MSTW2021}, these are defined as
$\Gain \eqdef \LimInf{>}{-\infty}\, \cap\, \Eparity$
and
$\Loss \eqdef \complementof{\Gain} = \LimInf{=}{-\infty}\, \cup\, \Oparity$.

\begin{remark}\label{rem:ssg-md}
For finite-state SSGs and the following objectives
there exist optimal MD strategies for both players.
Moreover, if the SSG is just a maximizing MDP then the set of states
that are almost surely winning for Maximizer can be computed in polynomial
time.
\begin{enumerate}
     \item $\eventually\,\reachset$ ~\cite{CONDON1992203}
     \item
       $\LimInf{\rhd}{-\infty}$,
       $\LimInf{\rhd}{\infty}$,
       $\LimSup{\rhd}{-\infty}$,
       $\LimSup{\rhd}{\infty}$,
       $\MP{>}{0}$ ~\cite[Prop.~1]{Brazdil2010}
     \item
       $\Eparity$ ~\cite{Zielonka:1998}
     \end{enumerate}
\end{remark}

\section{The Main Result}\label{sec:result}

The following theorem states our main result.

\begin{restatable}{theorem}{thmapproxenpar}\label{thm:approx_enpar}
  Let $\game = \gametuple$ be an SSG with transition rewards in unary
  assigned by function $r$ and colors assigned to states by function $\coloring$.
  For every state $\s \in \states$, initial energy level $i \ge 0$ and error margin
  $\eps > 0$, one can compute
  \begin{enumerate}
  \item
    a rational number $v'$ such that
    $0 \le v'-\valueof[\game][\en(i)\,\cap \Eparity]{\s} \le \eps$ in
    2-\NEXPTIME.
    \footnote{We write ``computing a number $v'$ in 2-\NEXPTIME'' as a shorthand for the
      property that questions like $v' \le c$ for constants $c$
      are decidable in 2-\NEXPTIME.} 
  \item
    $\eps$-optimal FD strategies $\zstrat_\eps$ and $\ostrat_\eps$ for Maximizer
    and Minimizer, resp., in 2-\NEXPTIME.
    These strategies use
    ${\Ocompl}\lrc{2\textrm{-}\mathsf{EXP}\lrc{\size{\game}} \cdot\log\lrc{\frac{1}{\eps}}}$
    memory modes.
  \end{enumerate}
  For rewards in binary, the bounds above increase by one exponential.
\end{restatable}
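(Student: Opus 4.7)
The plan is to reduce the infinite-state energy-parity game $\game$ to a finite parity SSG $\game'$ of doubly-exponential size whose values $\eps$-approximate those of $\game$, and then to solve $\game'$ with the standard $\NP\cap\coNP$ algorithms for parity SSGs. The first and crucial step is a quantitative convergence result: the map $k \mapsto \valueof[\game][\en(k)\cap\Eparity]{\s}$ is monotone non-decreasing and converges from below to $\valueof[\game][\Gain]{\s}$, since $\Gain = \LimInf{>}{-\infty}\cap\Eparity$ is the ``infinite-budget'' analogue of energy-parity. Concretely, I would prove that there exists $K = \Ocompl\lrc{2\textrm{-}\mathsf{EXP}\lrc{\size{\game}}\cdot\log\lrc{1/\eps}}$ such that for every $\s$ and every $k\ge K$ we have $\valueof[\game][\Gain]{\s}-\valueof[\game][\en(k)\cap\Eparity]{\s}\le\eps$. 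The geometric rate in $K$ would follow from analysing $\eps$-optimal $\Gain$-strategies: starting at stored energy $k$, the probability that the play's minimum partial sum ever drops below $-k$ should decay geometrically, with base controlled by $\size{\game}$, the maximum absolute weight $R$, and SCC-based recurrence bounds in the spirit of~\cite{MSTW2021}.

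Given such a $K$, I would build $\game'$ on state space $\states \times \set{0,1,\ldots,K}$ augmented with two gadgets: any transition that would drop the energy to $\le 0$ is routed to a losing sink of odd priority, and any transition that would push it above $K$ is handed off to a gadget whose value at the current state $\s$ encodes $\valueof[\game][\Gain]{\s}$ up to precision $\eps$. Transitions in the low-energy region mirror $\game$ with the energy coordinate updated by $r$, and the original priorities are inherited from $\coloring$; the objective on $\game'$ is then pure parity. Using $\en(K)\cap\Eparity \subseteq \Gain$ and the convergence bound, a coupling argument shows that the value of $\game'$ at $(\s,i)$ is an over-approximation of $\valueof[\game][\en(i)\cap\Eparity]{\s}$ that deviates by at most $\eps$: an optimal $\game'$-strategy for Maximizer, translated back by switching to the certified $\Gain$-strategy whenever the real energy exceeds $K$, achieves at least the $\game'$-value minus $\eps$ in $\game$, while Minimizer's $\game'$-optimal play can be mimicked in $\game$ to cap Maximizer's payoff.

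These ingredients yield both parts of the theorem. Since $\game'$ has $\size{\states}\cdot(K+1)$ product states plus a polynomial-size gadget, its size is $\Ocompl\lrc{2\textrm{-}\mathsf{EXP}\lrc{\size{\game}}\cdot\log\lrc{1/\eps}}$; applying an $\NP\cap\coNP$ value-approximation procedure for parity SSGs on top of $\game'$ gives the $2\textrm{-}\NEXPTIME$ bound of part (1). MD-optimal strategies in $\game'$ pull back to FD strategies in $\game$ whose memory modes record the capped energy together with the mode of the $\Gain$-gadget strategy, which gives part (2). When rewards are written in binary, the maximum weight $R$ is exponentially larger, so $K$, the complexity, and the memory bound each gain one exponential.

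The main obstacle is the geometric convergence step, which requires simultaneous control over both players: Maximizer plays an $\eps$-optimal $\Gain$-strategy (whose structural properties should be extracted from the almost-sure analysis of~\cite{MSTW2021}), while Minimizer tries to engineer long downward energy excursions. Combining martingale and hitting-time estimates on SSGs with an SCC-decomposition argument and a union bound over excursions should deliver the required bound, but extracting explicit doubly-exponential constants and verifying that the reduction preserves values at the $K$-boundary is the most delicate technical part.
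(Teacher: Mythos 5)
Your proposal follows essentially the same route as the paper: establish a doubly-exponential threshold $N$ beyond which the energy-parity value is $\eps$-close to the $\Gain$ value, unfold the game up to that threshold into a finite parity SSG with losing/winning gadgets at the two boundaries, and solve that finite game, pulling the MD parity strategies back as FD strategies that track the capped energy. The one step you flag as the main obstacle --- the geometric convergence rate --- is handled in the paper exactly along the lines you sketch: fix an exactly optimal \emph{finite-memory} Maximizer strategy for $\Gain$ (exponentially many modes, which is where the second exponential in $N$ comes from), pass to the induced exponential-size MDP, and invoke the known martingale-based bound on the probability of termination without reaching $\AS{\LimInf{=}{-\infty}}$.
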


\smallskip
We outline the main steps of the proof; details in the following sections.
We begin with the observation that $\en(i) \subseteq \en(j)$ for
$i \le j$, and thus for all states $\s$ we have
$
    \valueof[\game][\en(i)\,\cap\,\Eparity]{\s} \le \valueof[\game][\en(j)\,\cap\,\Eparity]{\s} \le 1.
$ 
So $\lim_{n \tendsto \infty} \valueof[\game][\en(n)\,\cap\,\Eparity]{\s}$
exists. We define
\begin{align}
    \limval[\game]{\s} &\eqdef \lim_{n \tendsto \infty} \valueof[\game][\en(n)\,\cap\,\Eparity]{\s}. \label{def:limval_energy}
\end{align}
We will see that $\limval[\game]{\s}$ and $\valueof[\game][\Gain]{\s}$ are in fact equal
(a consequence of \Cref{lem:Game-N}) and $\valueof[\game][\Gain]{\s}$
can be computed in nondeterministic polynomial time (\Cref{thm:Gain 2-player games}).
Intuitively, for high energy levels, the precise energy level does not matter
much for the value.

The main steps of the approximation algorithm are as follows.
\begin{enumerate}
\item
  Compute FD strategies $\optzstrat(\s)$
  that are optimal maximizing for the objective $\Gain$ starting from state $\s$ in $\game$.
  Compute an MD strategy $\optostrat$
  that is uniformly optimal minimizing 
  for the objective $\Gain$.
  Compute the value $\valueof[\game][\Gain]{\s}$ for every
  $\s \in \states$. 
  See \cref{sec:Gain}. \label{itm:computing gain strategies}
\item
  Compute a natural number $N$ such that for all $\s \in \states$ and
  all $i \ge N$ we have
  \[
    0 \le \valueof[\game][\Gain]{\s} - \valueof[\game][\en(i)\,\cap\,\Eparity]{\s} \le \eps.
  \]
  $N$ will be doubly exponential. See \cref{sec:Computing N}.
  \label{itm:findN}
\item
   Consider the finite-state parity game $\game'$ derived from $\game$ by encoding the energy
   level up-to $N$ into the states, i.e., the states of $\game'$ are of the form $(s,k)$
   for $s\in \states$ and $0 \le k \le N$, and colors are inherited from $\s$.
   Moreover, we add gadgets that ensure that
   states $(\s,N)$ at the upper end win with probability
   $\valueof[\game][\Gain]{\s}$ and states $(\s,0)$ at the lower end lose.
   By the previous item, $\valueof[\game][\Gain]{\s}$ is $\eps$-close to
   $\valueof[\game][\en(N)\,\cap\,\Eparity]{\s}$.
   Thus, for $k < N$ we can $\eps$-approximate the value 
   $v = \valueof[\game][\en(k)\,\cap\,\Eparity]{\s}$ by
   $v' \eqdef \valueof[\game'][\Eparity]{(\s,k)}$.
   If $k \ge N$ we can $\eps$-approximate $v$ 
   by $v' \eqdef \valueof[\game][\Gain]{\s}$.

   Moreover, we obtain $\eps$-optimal FD strategies $\zstrat_\eps$
   for Maximizer (resp.\ $\ostrat_\eps$ for Minimizer) for
   $\en(k)\,\cap\,\Eparity$ in $\game$.
   Let $\hat{\zstrat}$ (resp.\ $\hat{\ostrat}$) be optimal MD strategies
   for Maximizer (resp.\ Minimizer) for the objective $\Eparity$ in $\game'$.
   Then $\zstrat_\eps$ plays as follows.
   While the current energy level $j$ ($k$ plus the sum of the rewards so far)
   stays $<N$, then, at any state $\s'$, play like $\hat{\zstrat}$ at state
   $(\s',j)$ in $\game'$.
   Once the energy level reaches a value $\ge N$ at some state $\s'$
   for the first time, then play like $\optzstrat(s')$ forever. 
   Similarly, $\ostrat_\eps$ plays as follows.
   While the current energy level $j$ ($k$ plus the sum of the rewards so far)
   stays $<N$, then, at any state $\s'$, play like $\hat{\ostrat}$ at state
   $(\s',j)$ in $\game'$.
   Once the energy level reaches a value $\ge N$ (at any state)
   for the first time, then play like $\optostrat$ forever.
   See \cref{sec: Unfolding till N}.\label{itm:Unfolding}
\end{enumerate}

As a technical tool, we sometimes consider the dual of a game $\game$
(resp.\ the dual maximizing MDP of some minimizing MDP).
Consider $\game^{d} \eqdef \gametupleprime$
with the complement objective
$\complementof{\en(k)\,\cap\,\Eparity} = \Term(k) \cup \Oparity$,
where $\game^{d}$ is simply the game with the roles of Maximizer and
Minimizer reversed, i.e.,
\begin{align*}
    \states' &= \states &
    \zstates' &= \ostates &
    \ostates' &= \zstates &
    \rstates' &= \rstates &                       
    \transition' &= \transition &
    \probp' &= \probp
\end{align*}
Hence $\zallstrats{\game^{d}} = \oallstrats{\game}$ and
$\oallstrats{\game^{d}} = \zallstrats{\game}$.
It is easy to see that for any objective $\obj$ and start state $\s$
\begin{enumerate}
    \item $\valueof[\game][\obj]{\s} + \valueof[\game^{d}][\complementof{\obj}]{\s} = 1$.
    \item $\zstrat$ is $\eps$-optimal maximizing for $\obj$ in $\game$ iff it is $\eps$-optimal minimizing for $\complementof{\obj}$ in $\game^{d}$.
    \item $\ostrat$ is $\eps$-optimal minimizing for $\obj$ in $\game$ iff it is $\eps$-optimal maximizing for $\complementof{\obj}$ in $\game^{d}$.
\end{enumerate}
So approximating the value of $\en(k)\, \cap\, \Eparity$
in $\game$ can be reduced in linear time to approximating
the value of $\Term(k)\, \cup\, \Oparity$ in $\game^{d}$.

\section{Computing \texorpdfstring{$\valueof[\game][\Gain]{\s}$}{Value of Gain}}\label{sec:Gain}

Given an SSG $\game = \gametuple$ and a start state $\s$,
we will show how to compute $\valueof[\game][\Gain]{\s}$
and the optimal strategies for both players.

We start with the case of maximizing MDPs.
The following lemma summarizes some previous results
(\cite[Lemmas 30,16]{MSTW2021}, \cite[Lemma 26]{MSTW2017},
\cite[Proposition 4]{Gimbert2011ComputingOS}).

\begin{restatable}{lemma}{lemmaxmdp}\label{lem:maxmdp}
  Let $\mdp$ be a maximizing MDP.
  \begin{enumerate}
  \item
    $\limval[\mdp]{\s} = \valueof[\mdp][\Gain]{\s}$ for all states $\s \in \states$.
    \label{lem:maxmdp-1}
  \item
    Optimal strategies for $\Gain$ in $\mdp$ exist and can be chosen FD,
    with $\Ocompl(\exp(|\mdp|^{\Ocompl(1)}))$ memory modes,
    and exponential memory is also necessary.
    \label{lem:maxmdp-2}
  \item
    For any state $\s \in \states$, $\limval[\mdp]{\s}$ is rational and can be computed in
    $\Ocompl(|\mdp|^8)$ deterministic
    polynomial time if rewards are in unary, and in $\NP$ and $\coNP$ if
    rewards are in binary.
    \label{lem:maxmdp-3}
  \end{enumerate}
\end{restatable}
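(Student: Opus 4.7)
The plan is to prove each part by combining the cited prior results with a short direct argument for part~\ref{lem:maxmdp-1}. The key structural fact, used throughout, is that $\Gain = \LimInf{>}{-\infty} \cap \Eparity$ in a maximizing MDP can be analyzed via the maximal end component (MEC) decomposition: each MEC is either $\Gain$-winning (admits a combined mean-payoff/parity winning strategy with value $1$) or not, and the global value equals the maximum reachability probability into the union of $\Gain$-winning MECs.

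For part~\ref{lem:maxmdp-1}, I would establish both inequalities. The direction $\limval[\mdp]{\s} \le \valueof[\mdp][\Gain]{\s}$ is immediate: any play in $\en(n) \cap \Eparity$ has partial reward sums bounded below by $-n$, so its $\liminf$ is $>-\infty$, and therefore $\en(n) \cap \Eparity \subseteq \Gain$ for every $n$, which yields the inequality after taking the limit in $n$. For the reverse direction $\valueof[\mdp][\Gain]{\s} \le \limval[\mdp]{\s}$, observe that since rewards are integer-valued, a play has partial sums bounded below iff it lies in $\en(n)$ for some $n$; hence $\Gain = \bigcup_n (\en(n) \cap \Eparity)$ as an increasing union. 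By continuity of measure, for any Maximizer strategy $\zstrat$ we have $\Prob[\mdp]{\zstrat,\s}(\en(n) \cap \Eparity) \to \Prob[\mdp]{\zstrat,\s}(\Gain)$ as $n \to \infty$; picking $\zstrat$ to be $\eps$-optimal for $\Gain$ and $n$ sufficiently large gives $\valueof[\mdp][\en(n) \cap \Eparity]{\s} \ge \valueof[\mdp][\Gain]{\s} - 2\eps$, and the claim follows by letting $\eps \to 0$.

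For part~\ref{lem:maxmdp-2}, I would invoke~\cite[Lemma~30]{MSTW2021} for the structural form of optimal Maximizer strategies: first play an MD reachability strategy aiming at the union of $\Gain$-winning MECs, then, upon entering such a MEC, switch to an FD strategy that wins $\Gain$ almost surely inside it. The memory required inside a MEC combines a finite-memory parity witness with a mean-payoff component and is exponential in $|\mdp|$, matching the bounds in~\cite[Lemma~26]{MSTW2017}; the matching lower bound is inherited from known examples where Maximizer must remember a bounded summary of the current deficit in order to synchronize the parity and energy components.

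For part~\ref{lem:maxmdp-3}, given the MEC decomposition, computing $\limval[\mdp]{\s}$ reduces to (i) identifying the $\Gain$-winning MECs, which by~\cite[Proposition~4]{Gimbert2011ComputingOS} takes $\Ocompl(|\mdp|^8)$ deterministic time for unary rewards via the mean-payoff-parity MDP algorithm, and (ii) solving a polynomial-size linear program for maximum reachability to their union. Each step returns a rational number, so the overall value is rational and computable within the claimed bound. For binary rewards, step~(i) moves to $\NP \cap \coNP$ by the standard guess-and-verify approach for mean-payoff parity, giving the same overall complexity. The main obstacle is the continuity argument in part~\ref{lem:maxmdp-1}, which crucially exploits the integrality of rewards to obtain the exact equality $\Gain = \bigcup_n (\en(n) \cap \Eparity)$ rather than a mere almost-sure one; everything else is a careful assembly of the cited ingredients.
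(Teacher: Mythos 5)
Your argument for part~\ref{lem:maxmdp-1} is correct and is a genuinely more self-contained route than the paper, which simply cites \cite[Lemma 30]{MSTW2021}: since $\LimInf{>}{-\infty}$ holds iff the partial sums are bounded below, $\Gain=\bigcup_n(\en(n)\cap\Eparity)$ is indeed an increasing union, and continuity of measure applied to an $\eps$-optimal strategy (which is a single fixed measure, as there is no Minimizer) gives the nontrivial inequality. That part stands.

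The gap is in your characterization of the ``$\Gain$-winning'' regions, which you then use for both parts~\ref{lem:maxmdp-2} and~\ref{lem:maxmdp-3}. You identify a $\Gain$-winning MEC with one that is almost-surely winning for mean-payoff-parity and propose to detect these via the $\MP{>}{0}\cap\Eparity$ algorithm of \cite[Proposition 4]{Gimbert2011ComputingOS}. This misses an entire mode of winning $\Gain$: a play can satisfy $\LimInf{>}{-\infty}$ with mean payoff exactly $0$, provided the partial sums stay bounded. A one-state MDP with a single self-loop of reward $0$ and even color has $\Gain$-value $1$ but $\MP{>}{0}\cap\Eparity$-value $0$, so your part~\ref{lem:maxmdp-3} algorithm would output $0$ instead of $1$, and your part~\ref{lem:maxmdp-2} strategy would never enter a winning region. (Note that $\MP{\ge}{0}$ does not help either: zero mean payoff generically forces $\liminf$ of the partial sums to be $-\infty$.) The paper handles this by splitting the almost-sure winning set into \emph{two} pieces, $A=\bigcup_{k}\AS{\ES{k}\cap\Eparity}$ (the bounded-energy/storage case, with $k$ bounded by $\Ocompl(|\states|\cdot R)$ via \cite[Theorem 8]{MSTW2017}) and $B=\AS{\LimInf{=}{\infty}\cap\Eparity}$ (the divergent case, which is the only place where the reduction to $\MP{>}{0}\cap\Eparity$ and \cite{Gimbert2011ComputingOS} is used), and then plays reachability to $A\cup B$. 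Your construction needs the $A$ component (and the storage-condition machinery that certifies it with $\Ocompl(k\cdot|\states|)$ memory) to be correct; without it both the strategy construction and the complexity claim in part~\ref{lem:maxmdp-3} rest on a false reduction. The paper in fact sidesteps your two-step ``winning MECs plus reachability LP'' computation for part~\ref{lem:maxmdp-3} entirely by citing \cite[Lemma 26]{MSTW2017} for the value computation, but if you want to keep your algorithmic route you must add the detection of the storage component to step~(i).
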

\begin{proof}
  \cref{lem:maxmdp-1} holds by \cite[Lemma 30]{MSTW2021}.

  Towards \cref{lem:maxmdp-2}, we follow the proof of \cite[Lemma 16]{MSTW2021}.
  Since $\Gain = \LimInf{>}{-\infty}\, \cap\, \Eparity$ is shift-invariant,
  there exist optimal strategies by \cite{GH2010}.
  By \cite[Theorem 18]{MSTW2017} and \cref{lem:maxmdp-1}, an optimal strategy
  for $\Gain$ can be constructed as follows.
  Let $A\eqdef\bigcup_{k\in\N}\AS{\ES{k}\cap\Eparity}$
  and $B\eqdef\AS{\LimInf{=}{\infty}\cap\Eparity}$
  be the subsets of states from which there exist almost surely winning
  strategies for the objectives $\ES{k}\cap\Eparity$ and
  $\LimInf{=}{\infty}\cap\Eparity$, respectively.
  By \cite[Theorem 8]{MSTW2017}, we can restrict the values $k$ in the
  definition of $A$ by some $k' = \Ocompl(|\states|\cdot R)$, i.e.,
  $A = \bigcup_{k \le k'}\AS{\ES{k}\cap\Eparity}$.
  An optimal strategy $\zstrat$ for $\Gain$ works in two phases.
  First it plays an optimal strategy $\zstrat_R$ towards reaching the set $A \cup B$,
  where $\zstrat_R$ can be chosen MD by \cref{rem:ssg-md}.
  Then, upon reaching $A$ (resp.\ $B$), it plays an almost surely winning
  strategy $\zstrat_A$ for the objective $\ES{k}\cap\Eparity$
  (resp.\ $\zstrat_B$ for the objective $\LimInf{=}{\infty}\cap\Eparity$).
  By \cite[Theorem 8]{MSTW2017},
  the strategy $\zstrat_A$ requires $\Ocompl(k\cdot |\states|)$ memory modes
  for a given $k$ and thus at most $\Ocompl(|\states|^2 \cdot R)$, since we
  can assume that $k \le k'$.
  Towards the strategy $\zstrat_B$, we first observe that
  in finite MDPs a strategy is almost-surely winning for $\LimInf{=}{\infty}\cap\Eparity$
  iff it is almost-surely winning for $\MP{>}{0}\cap\Eparity$.
  By \cite[Proposition 4]{Gimbert2011ComputingOS}, there exist optimal
  deterministic strategies for $\MP{>}{0}\cap\Eparity$ that use exponential
  memory, i.e., $\Ocompl(\exp(|\mdp|^{\Ocompl(1)}))$ memory modes.
  The memory required for $\zstrat_B$ exceeds that of $\zstrat_R$ and
  $\zstrat_A$ (even when $R$ is given in binary), and the one extra memory mode to
  record the switch from $\zstrat_R$ to $\zstrat_A$ (resp.\ $\zstrat_B$) is
  negligible in comparison. 
  Thus we can conclude that $\zstrat$ uses $\Ocompl(\exp(|\mdp|^{\Ocompl(1)}))$ memory modes.
  \cite[Fig.~1 and Prop.~4]{Gimbert2011ComputingOS} shows that exponential memory is necessary.
  
  Towards \cref{lem:maxmdp-3}, let $d \eqdef |\coloring(\states)|$ be the
  number of priorities in the parity condition.
  By \cite[Lemma 26]{MSTW2017}, for each $\s \in \states$,
  $\limval[\mdp]{\s}$ is rational and can be computed in deterministic time
  $\tilde{\?O}(|E| \cdot d \cdot |\states|^4 \cdot R + d \cdot |\states|^{3.5} \cdot (|P| + |r|)^2)$
  (and still in $\NP$ and $\coNP$ if $R$ is given in binary).
  So $\limval[\mdp]{\s}$ can be computed in $\Ocompl(|\mdp|^8)$ deterministic
  polynomial time if weights are given in unary, and in $\NP$ and $\coNP$ if
  weights are given in binary.
\end{proof}  

In order to extend \cref{lem:maxmdp} from MDPs to games,
we need the notion of derived MDPs, obtained by fixing the choices of one
player according to some FD strategy.
Given an SSG $\game=\gametuple$ and a finite memory deterministic (FD)
strategy $\ostrat$ for Minimizer (resp.\ $\zstrat$ for Maximizer) from a state
$\s$, described by $\memstrattuple$,
let $\game_{\ostrat}$ (resp.\ $\game^{\zstrat}$) be the maximizing
(resp.\ minimizing) MDP with state space $\memconfset \x \states$
obtained by fixing Minimizer's (resp.\ Maximizer's) choices according to
$\ostrat$ (resp.\ $\zstrat$).

\begin{lemma}\label{lem:MDP_to_game}
  For every SSG $\game$, objective $\obj$ and
  Minimizer (resp.\ Maximizer) FD strategy $\ostrat = \memstrattuple$ (resp.\ $\zstrat$),
  from state $\s$ we get
  $
    \valueof[\game^{\zstrat}][\obj]{\tuple{\initmem,\s}} \le
    \valueof[\game][\obj]{\s} \le \valueof[\game_{\ostrat}][\obj]{\tuple{\initmem,\s}}   
  $
  and equality holds if $\ostrat$ (resp.\ $\zstrat$) is optimal from state $\s$. 
\end{lemma}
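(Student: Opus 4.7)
The plan is to show that fixing one player's FD strategy in $\game$ yields an MDP whose value coincides with the best response against that fixed strategy in $\game$; then the inequalities follow directly from the definition of the game value via determinacy.

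First I would set up the correspondence between strategies. Since $\ostrat$ is described by $\memstrattuple$, its memory mode after any finite path in $\game$ is a deterministic function of the history of transitions observed. Hence any Minimizer-free strategy $\zstrat'$ for the derived MDP $\game_{\ostrat}$ (whose state space is $\memconfset \times \states$) lifts to a Maximizer strategy $\zstrat$ in $\game$ that internally simulates the memory update $\memup$, and conversely every $\zstrat \in \zallstrats{\game}$ can be viewed as a strategy in $\game_{\ostrat}$ by forgetting the memory component when choosing actions. Under this correspondence, the induced measures on cylinders coincide (projecting $\memconfset \times \states$ onto $\states$), so $\Prob[\game_{\ostrat}]{\zstrat', \tuple{\initmem, \s}}(\obj) = \Prob[\game]{\zstrat, \ostrat, \s}(\obj)$. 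Consequently
\[
  \valueof[\game_{\ostrat}][\obj]{\tuple{\initmem,\s}} = \sup_{\zstrat \in \zallstrats{\game}} \Prob[\game]{\zstrat,\ostrat,\s}(\obj).
\]
An entirely symmetric argument gives $\valueof[\game^{\zstrat}][\obj]{\tuple{\initmem,\s}} = \inf_{\ostrat \in \oallstrats{\game}} \Prob[\game]{\zstrat,\ostrat,\s}(\obj)$.

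Next I would use determinacy of $\game$ (the objective $\obj$ is Borel in our setting). By definition of value,
\[
  \valueof[\game][\obj]{\s} = \inf_{\ostrat' \in \oallstrats{\game}} \sup_{\zstrat \in \zallstrats{\game}} \Prob[\game]{\zstrat,\ostrat',\s}(\obj) \le \sup_{\zstrat} \Prob[\game]{\zstrat,\ostrat,\s}(\obj) = \valueof[\game_{\ostrat}][\obj]{\tuple{\initmem,\s}},
\]
which gives the upper inequality, and
\[
  \valueof[\game^{\zstrat}][\obj]{\tuple{\initmem,\s}} = \inf_{\ostrat} \Prob[\game]{\zstrat,\ostrat,\s}(\obj) \le \sup_{\zstrat'}\inf_{\ostrat} \Prob[\game]{\zstrat',\ostrat,\s}(\obj) = \valueof[\game][\obj]{\s},
\]
the lower one. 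Finally, if $\ostrat$ is optimal minimizing from $\s$, then by definition $\sup_{\zstrat} \Prob[\game]{\zstrat,\ostrat,\s}(\obj) \le \valueof[\game][\obj]{\s}$, which combined with the opposite inequality gives equality; the symmetric statement holds when $\zstrat$ is optimal maximizing.

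The only non-routine step is the bijection between strategies in $\game$ against the fixed FD strategy and strategies in the derived MDP; the main obstacle is purely bookkeeping, ensuring that the cylinder probabilities match under the projection $\memconfset\times\states \to \states$ and that measurability transfers between the two $\sigma$-algebras. Once this identification is made, the remaining inequalities are immediate from the $\inf\sup = \sup\inf$ characterisation of the value and the definition of optimality.
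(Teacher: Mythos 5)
Your proof is correct and is the standard argument the paper leaves implicit (the lemma is stated without proof): identify best responses against the fixed FD strategy with strategies in the derived MDP via the deterministic memory-update, match the cylinder-set measures under the projection $\memconfset\times\states\to\states$, and then read off the inequalities from the $\inf\sup=\sup\inf$ characterisation of the value together with the definition of optimality. No gaps; the bookkeeping step you flag is exactly the content of Definition~\ref{def:fix-fd}.
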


\begin{theorem}\label{thm:Gain 2-player games}
Consider an SSG $\game = \gametuple$ with the $\Gain$ objective.
\begin{enumerate}
\item\label{thm:Gain 2-player games-1}
  Optimal Minimizer strategies exist and can be chosen uniform MD.
\item\label{thm:Gain 2-player games-2}
  $\valueof[\game][\Gain]{\s}$ is rational and 
  questions about it, i.e., $\valueof[\game][\Gain]{\s} \le c$ for
  constants $c$, are decidable in $\NP$.
\item\label{thm:Gain 2-player games-3}
    Optimal Maximizer strategies exist and can be chosen FD,
    with $\Ocompl(\exp(|\game|^{\Ocompl(1)}))$ memory modes.
    Moreover, exponential memory is also necessary.  
\end{enumerate}
\end{theorem}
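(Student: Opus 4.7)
The plan is to bootstrap the MDP results of \cref{lem:maxmdp} to the two-player setting via \cref{lem:MDP_to_game}, tackling the items in the order (1), (2), (3).

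For item (1), I would show that Minimizer's effective objective, namely $\Loss = \LimInf{=}{-\infty}\cup\Oparity$, is both shift-invariant and submixing. Shift-invariance is immediate; for submixing, given a merge $u_0 v_0 u_1 v_1 \ldots \in \Loss$, split on which disjunct holds. If $\LimInf{=}{-\infty}$ holds in the merge, then the merged partial sums at block boundaries decompose additively as (sum of $u$-block sums) plus (sum of $v$-block sums), so at least one of the two sequences of partial sums must diverge to $-\infty$, placing $u_0u_1\ldots$ or $v_0v_1\ldots$ in $\LimInf{=}{-\infty}$. Otherwise $\Oparity$ holds, and the minimum infinitely-occurring color $c$ (which is odd) recurs in $u$ or in $v$; by a minimality argument it remains the minimum there, so that subsequence is in $\Oparity$. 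A Gimbert--Zielonka style half-positional determinacy argument for the submixing shift-invariant objective that Minimizer maximizes then yields a uniform MD optimal Minimizer strategy $\ostrat^*$ in $\game$.

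For item (2), fixing $\ostrat^*$ and applying \cref{lem:MDP_to_game} gives $\valueof[\game][\Gain]{\s} = \valueof[\game_{\ostrat^*}][\Gain]{\s}$, which is rational by \cref{lem:maxmdp}.\ref{lem:maxmdp-3}. To decide $\valueof[\game][\Gain]{\s} \le c$ in \NP, nondeterministically guess a polynomial-sized MD strategy $\ostrat$ for Minimizer, construct the maximizing MDP $\game_\ostrat$, and verify $\valueof[\game_\ostrat][\Gain]{\s} \le c$ in deterministic polynomial time via \cref{lem:maxmdp}.\ref{lem:maxmdp-3}.

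For item (3), applying \cref{lem:maxmdp}.\ref{lem:maxmdp-2} to the maximizing MDP $\game_{\ostrat^*}$ yields an FD Maximizer strategy $\zstrat^*$ with $\Ocompl(\exp(|\game|^{\Ocompl(1)}))$ memory modes that is optimal in the MDP, and the exponential lower bound is inherited verbatim from \cref{lem:maxmdp}.\ref{lem:maxmdp-2}. \emph{The hardest step}, and the main obstacle, is upgrading optimality in $\game_{\ostrat^*}$ to optimality in $\game$ against arbitrary Minimizer deviations. I would exploit the two-phase structure of the MDP strategy described in the proof of \cref{lem:maxmdp}.\ref{lem:maxmdp-2}: a reachability phase steering the play to an almost-surely winning region $A \cup B$, followed by an almost-surely winning sub-strategy inside. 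In the two-player setting, the almost-surely winning region for Maximizer is by definition robust against \emph{any} Minimizer response, while the reachability phase toward it can be realized by MD strategies for both players (\cref{rem:ssg-md}); splicing the two phases (with the MDP almost-sure strategy supplying the dominant memory contribution) yields a globally optimal FD $\zstrat^*$ in $\game$ with the same memory bound.
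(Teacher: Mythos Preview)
Your treatment of items~(1) and~(2) is essentially the paper's: establish that $\Loss$ is shift-invariant and submixing, invoke half-positional determinacy (the paper cites \cite[Theorem~1.1]{GK2022}) to get a uniform optimal MD Minimizer strategy $\optostrat$, then use \cref{lem:MDP_to_game} and \cref{lem:maxmdp} to compute the value in the resulting MDP and certify $\valueof[\game][\Gain]{\s}\le c$ in $\NP$ by guessing $\optostrat$.

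For item~(3), however, your approach has a genuine gap. You correctly identify the hard step---lifting optimality of $\zstrat^*$ from $\game_{\ostrat^*}$ to optimality in $\game$ against arbitrary Minimizer responses---but your proposed fix conflates two different objects. The sets $A,B$ and the almost-sure strategies $\zstrat_A,\zstrat_B$ in the proof of \cref{lem:maxmdp}(\ref{lem:maxmdp-2}) are defined in the \emph{MDP} $\game_{\ostrat^*}$; they are guaranteed to be almost-surely winning only against $\optostrat$, not against an arbitrary Minimizer deviation. If instead you redefine $A,B$ as the \emph{game}-level almost-sure regions (as your phrase ``by definition robust against any Minimizer response'' suggests), then you owe two further ingredients: (a) that $\valueof[\game][\Gain]{\s}$ equals the game-level reachability value to $A\cup B$, and (b) game-level almost-sure winning strategies inside $A$ and $B$ with at most exponential memory. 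Neither is supplied by \cref{lem:maxmdp}; in particular (b) would require two-player analogues of \cite[Theorem~8]{MSTW2017} and \cite[Proposition~4]{Gimbert2011ComputingOS}, which are MDP results. Saying ``the MDP almost-sure strategy supplies the dominant memory contribution'' does not help, since that strategy need not be almost-surely winning in the game.

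The paper sidesteps this entirely by invoking a general black-box lifting theorem \cite[Theorem~6, Def.~24]{MSTW2021}: whenever the complement objective is shift-invariant and submixing, optimal FD Maximizer strategies in the SSG require at most $|\ostates|\cdot\lceil\log|E|\rceil$ extra bits of memory beyond what is needed in any derived MDP with Minimizer's choices fixed. Combined with \cref{lem:maxmdp}(\ref{lem:maxmdp-2}) this yields the $\Ocompl(\exp(|\game|^{\Ocompl(1)}))$ upper bound directly, without re-deriving the two-phase structure at the game level. The exponential lower bound is inherited from MDPs, as you say.
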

\begin{proof}
Towards \cref{thm:Gain 2-player games-1}, observe that since both the objectives $\LimInf{=}{-\infty}$ and
$\Oparity$ are shift-invariant and submixing, so is their union, i.e.,
$\complementof{\Gain}$ is shift-invariant and submixing.
Hence, by \cite[Theorem 1.1]{GK2022},
an optimal MD strategy $\optostrat_\s$ for Minimizer
exists from any state $\s \in \states$.
Since $\states$ is finite and $\Gain$ is shift-invariant,
we can also obtain a uniformly optimal MD strategy
$\optostrat$, i.e., $\optostrat$ is optimal from every state.

Towards \cref{thm:Gain 2-player games-2},
consider the maximizing MDP $\game_{\optostrat}$ obtained from $\game$ by 
fixing $\optostrat$ (cf.~\cref{def:fix-fd}).
Since $\optostrat$ is MD, the states of $\game_{\optostrat}$ are the
same as the states as $\game$.
Since $\optostrat$ is optimal for Minimizer from every state $\s$,
we obtain that $\valueof[\game][\Gain]{\s} =
\valueof[\game_{\optostrat}][\Gain]{\s}$ for every state $\s$ by
\cref{lem:MDP_to_game}.
By \cref{lem:maxmdp}, the latter is rational and can be computed in polynomial time for
weights in unary (resp.\ in $\NP$ and $\coNP$ for weights in binary).
Thus, by guessing $\optostrat$, we can decide questions
$\valueof[\game][\Gain]{\s} \le c$ in $\NP$.

Towards \cref{thm:Gain 2-player games-3}, we again use the property
that $\complementof{\Gain}$ is shift-invariant and submixing (see above).
By~\cite[Theorem 6, Def.~24]{MSTW2021}, optimal FD Maximizer strategies for $\Gain$ in an SSG require
only $|\ostates| \cdot \lceil\log(|E|)\rceil$ many extra bits of memory above the memory required for
optimal Maximizer strategies in any derived MDP $\mdp$ where Minimizer's choices
are fixed.
Hence, by \cref{lem:maxmdp}, one can obtain optimal FD Maximizer strategies in
$\game$ that use at most
$2^{|\ostates|\cdot\lceil\log(|E|)\rceil} \cdot \Ocompl(\exp(|\mdp|^{\Ocompl(1)})) = \Ocompl(\exp(|\game|^{\Ocompl(1)}))$
memory modes.
The corresponding exponential lower bound on the memory holds already for MDPs
by \cref{lem:maxmdp}.
\end{proof}

\section{Computing the Upper Bound \texorpdfstring{$N$}{N}}\label{sec:Computing N}

We show how to compute the upper bound $N$, up-to which Maximizer needs to
remember the energy level, for any given error margin $\eps >0$.
Similarly as in \cref{sec:Gain}, we first solve the problem for maximizing
MDPs and then extend the solution to SSGs.

\subsection{Computing \texorpdfstring{$N$}{N} for maximizing MDPs}
Given a maximizing MDP $\mdp = \mdptuple$ and $\eps > 0$,
we will compute an $N \in \N$ such that for all $\s \in \states$ and all $j \ge N$
\[
	0 \le \valueof[\mdp][\Term(j)\,\cup\,\Oparity]{\s} - \valueof[\mdp][\Loss]{\s} \le \eps.
\]
Recall that $\Loss = \LimInf{=}{-\infty}\, \cup\, \Oparity$.
We now define the sets of states 
$W_0 \eqdef \AS{\Loss}$, $W_1 \eqdef \AS{\LimInf{=}{-\infty}}$ and $W_2 \eqdef \AS{\Oparity}$.
By \cref{rem:ssg-md}, there exist optimal MD strategies for
$\LimInf{=}{-\infty}$ and $\Oparity$.
Since $\Loss$ is shift-invariant and submixing, there exists an optimal MD
strategy for it by \cite[Theorem 1.1]{GK2022}.

\begin{lemma}\label{lem:W0-W1-W2}
	For every state $\s$ in the MDP $\mdp$ we have
	\begin{enumerate}
		\item $W_1 \cup W_2 \subseteq W_0$ \label{itm:Wi sub W}
		\item $\valueof[][\eventually\, {W_0}]{\s} \le \valueof[][\Loss]{\s}$\label{itm:Fw lessequal nus}
		\item $\valueof[][\Oparity \,\cap \, \complementof{\eventually\, {W_2}}]{\s} = 0$ \label{itm:val_parcapcompW2 = 0}
		\item for every initial energy level $j \ge 0$
		      \begin{align}
			       & \valueof[][\lrc{\Term(j) \, \cup \, \Oparity}\,\cap\, \eventually\, W_0]{\s} = \valueof[][\eventually\, W_0]{\s} \label{eq:valTermcapFW = valFW}                                                                                                     \\
			       & \valueof[][\Loss]{\s} \le \valueof[][\Term(j)\, \cup\, \Oparity]{\s}  \le \valueof[][\Loss]{\s} + \sup_{\zstrat}\Prob[]{\zstrat,\s} \left(\Term(j)\, \cap \, \complementof{\eventually\, {W_1}}\right) \label{eq:nus+extra}
		      \end{align}
	\end{enumerate}
\end{lemma}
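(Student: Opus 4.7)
Part~\ref{itm:Wi sub W} is immediate from set inclusions: an almost-surely winning strategy for $\LimInf{=}{-\infty}$ (resp.\ for $\Oparity$) is also almost-surely winning for $\Loss=\LimInf{=}{-\infty}\cup\Oparity$, so $W_1\cup W_2\subseteq W_0$. The lower bound of~\eqref{eq:nus+extra} is likewise immediate: as sets of plays, $\Loss\subseteq\Term(j)\cup\Oparity$, since $\LimInf{=}{-\infty}$ drives the accumulated sum below $-j$ for any finite initial energy~$j$ (giving $\Term(j)$) while $\Oparity$ is preserved; monotonicity of value in the objective then yields the inequality.

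For Part~\ref{itm:Fw lessequal nus} and the $\ge$ direction of~\eqref{eq:valTermcapFW = valFW}, I would use a \emph{reach-then-play} composition. Pick an MD optimal strategy $\zstrat_R$ for $\eventually W_0$ (existing by \cref{rem:ssg-md}) and, for each state $t\in W_0$, an almost-surely winning strategy $\zstrat_t$ for $\Loss$ starting at~$t$. Let $\tau$ denote the first hitting time of $W_0$ and let $\zstrat$ follow $\zstrat_R$ up to~$\tau$, then switch to $\zstrat_{\s_\tau}$. Conditioned on $\tau<\infty$, the tail from~$\tau$ almost-surely satisfies $\Loss$; a bounded initial segment does not affect $\LimInf{=}{-\infty}$ and $\Oparity$ is shift-invariant, so tail-satisfaction of $\Loss$ transfers to the whole play, and $\LimInf{=}{-\infty}$ further implies $\Term(j)$. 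Hence $\zstrat$ achieves both $\Loss$ and $(\Term(j)\cup\Oparity)\cap\eventually W_0$ with probability at least $\valueof[][\eventually W_0]{\s}$; the first bound gives Part~\ref{itm:Fw lessequal nus}, and the second, together with the trivial $\le$ direction from set inclusion, gives~\eqref{eq:valTermcapFW = valFW}.

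For Part~\ref{itm:val_parcapcompW2 = 0}, I would use a L\'evy-type martingale argument. Under any strategy $\zstrat$, the conditional probabilities $M_n\eqdef\Prob[]{\zstrat,\s}(\Oparity\mid\F_n)$ form a bounded martingale converging almost surely to the indicator of $\Oparity$. Because $\Oparity$ is shift-invariant, $M_n\le\valueof[][\Oparity]{\s_n}$ (the supremum over continuation strategies from $\s_n$); on any play in $\complementof{\eventually W_2}$, $\s_n\notin W_2$ for every~$n$, so $\sup_n M_n\le\max_{t\notin W_2}\valueof[][\Oparity]{t}<1$ (a maximum over finitely many states whose values are strictly below~$1$). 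But on any play in $\Oparity$, $M_n\to 1$. Hence $\Oparity\cap\complementof{\eventually W_2}$ has probability~$0$ under every strategy, giving value~$0$.

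For the upper bound of~\eqref{eq:nus+extra}, I would decompose by whether $W_1$ is ever visited. For any strategy $\zstrat$,
\[
\Prob[]{\zstrat,\s}(\Term(j)\cup\Oparity)\le\Prob[]{\zstrat,\s}(\eventually W_1)+\Prob[]{\zstrat,\s}(\Term(j)\cap\complementof{\eventually W_1})+\Prob[]{\zstrat,\s}(\Oparity\cap\complementof{\eventually W_1}).
\]
The middle term is absorbed by the supremum on the right of~\eqref{eq:nus+extra}. To absorb the other two into $\valueof[][\Loss]{\s}$, I would construct a \emph{switch-at-$W_1$} strategy $\zstrat'$ that mimics $\zstrat$ until the first hit of $W_1$ and then switches to an almost-surely winning strategy for $\LimInf{=}{-\infty}$ from $W_1$ (guaranteed by \cref{rem:ssg-md}). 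Plays of $\zstrat'$ that reach $W_1$ almost-surely satisfy $\LimInf{=}{-\infty}\subseteq\Loss$, while plays of $\zstrat'$ that never reach $W_1$ coincide in law with the corresponding plays of $\zstrat$. Since $\eventually W_1$ and $\Oparity\cap\complementof{\eventually W_1}$ are disjoint events, this yields $\valueof[][\Loss]{\s}\ge\Prob[]{\zstrat',\s}(\Loss)\ge\Prob[]{\zstrat,\s}(\eventually W_1)+\Prob[]{\zstrat,\s}(\Oparity\cap\complementof{\eventually W_1})$, which closes the argument upon taking~$\sup$ over~$\zstrat$. The main obstacles will be the L\'evy/martingale step of Part~\ref{itm:val_parcapcompW2 = 0} and the careful handling of first-hitting stopping times in the switching arguments; the ingredients are standard, but one must check carefully that the bounded-below-$1$ step goes through and that the disjointness of events yields the desired additivity.
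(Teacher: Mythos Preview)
Your plan is correct and, for Parts~\ref{itm:Wi sub W}, \ref{itm:Fw lessequal nus}, \eqref{eq:valTermcapFW = valFW}, and the lower bound of~\eqref{eq:nus+extra}, it matches the paper's proof essentially line for line (reach-then-play via an MD reachability strategy followed by an almost-sure $\Loss$ strategy, plus the inclusion $\LimInf{=}{-\infty}\subseteq\Term(j)$).

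The genuine differences are in Part~\ref{itm:val_parcapcompW2 = 0} and the upper bound of~\eqref{eq:nus+extra}. For Part~\ref{itm:val_parcapcompW2 = 0}, the paper does not use a martingale argument: it collapses $W_2$ to a losing sink, obtaining an MDP $\mdp'$ with $\valueof[\mdp'][\Oparity]{\s}=\valueof[\mdp][\Oparity\cap\complementof{\eventually W_2}]{\s}$, and then invokes the structural fact (from \cite[Theorem~3.2]{GH2010}) that a positive $\Oparity$ value somewhere forces value~$1$ at some state, contradicting that no state of $\mdp'$ lies in $W_2$. Your L\'evy $0$--$1$ argument is a valid alternative and is arguably more self-contained; the key step you flagged---$M_n\le\valueof[][\Oparity]{\s_n}$ via shift-invariance, combined with the finite-state uniform bound $\max_{t\notin W_2}\valueof[][\Oparity]{t}<1$---does go through. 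For the upper bound of~\eqref{eq:nus+extra}, the paper instead decomposes along $\eventually W_0$ (not $\eventually W_1$): it uses \eqref{eq:valTermcapFW = valFW} and Part~\ref{itm:Fw lessequal nus} to bound the $\eventually W_0$ part by $\valueof[][\Loss]{\s}$, and then bounds the $\complementof{\eventually W_0}$ part by $\sup_\zstrat\Prob{\zstrat,\s}(\Term(j)\cap\complementof{\eventually W_1})+\sup_\zstrat\Prob{\zstrat,\s}(\Oparity\cap\complementof{\eventually W_2})$, the last term vanishing by Part~\ref{itm:val_parcapcompW2 = 0}. Your route---decompose along $\eventually W_1$ and use a switch-at-$W_1$ strategy to show $\Prob{\zstrat,\s}(\eventually W_1)+\Prob{\zstrat,\s}(\Oparity\cap\complementof{\eventually W_1})\le\valueof[][\Loss]{\s}$---is equally valid and slightly more direct, since it does not rely on having established \eqref{eq:valTermcapFW = valFW} or Part~\ref{itm:val_parcapcompW2 = 0} first; the paper's route is more modular in that it reuses the earlier items.
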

\begin{proof}\,
\begin{enumerate}
\item
This follows directly from the definitions of $W_0,W_1,W_2$.
\item
  Let $\zstrat'$ be an optimal MD strategy for $\eventually\, {W_0}$ from $\s$ and
  $\zstrat''$ be an almost surely winning MD strategy for $\Loss$ from any state in $W_0$.
  Let $\zstrat$ be the strategy that plays $\zstrat'$ until reaching $W_0$ and
  then switches to $\zstrat''$.
  We have $\valueof[][\Loss]{\s} \ge \Prob{\zstrat,\s}(\Loss) \ge
  \Prob{\zstrat',\s}(\eventually\,W_0)
  = \valueof[][\eventually\, W_0]{\s}$.
\item
  For $\s \in W_2$ the statement is obvious.
  So let $\s \notin W_2$ and consider the modified
  MDP $\mdp' = \mdptupleprime$ where all states in $W_2$ are collapsed into a
  losing sink.
  I.e., $\states' \eqdef (\states \setminus W_2) \uplus \set*{\mathit{trap}}$,
  with $\mathit{trap}$ a new random sink state having color $0$
  (thus losing for objective $\Oparity$),
  $\transition'$ contains all of
  $\left(\transition \cap \set*{(\states \setminus W_2)\x(\states \setminus
      W_2)} \cup (\mathit{trap},\mathit{trap}) \right)$ and all transitions to
  $W_2$ are redirected to $\mathit{trap}$ and $\probp'$ is derived accordingly
  from $\probp$.
  Then $\valueof[\mdp'][\Oparity]{\hat{\s}} = \valueof[\mdp][\Oparity \,\cap \, \complementof{\eventually {W_2}}]{\hat{\s}}$
  for all states $\hat{\s} \in \states \setminus W_2$. 
  Towards a contradiction, assume that
  $\valueof[\mdp][\Oparity \,\cap \, \complementof{\eventually {W_2}}]{\s} > 0$.
  Hence $\valueof[\mdp'][\Oparity]{\s} > 0$.
  Then, by \cite[ Theorem~3.2]{GH2010}, there exists a state $\s' \in \states'$
  such that $\valueof[\mdp'][\Oparity]{\s'} = 1$,
  and it is easy to see that $\s' \neq \mathit{trap}$ and thus
  $\s' \in \states \setminus W_2$.
  But this implies that $\valueof[\mdp][\Oparity]{\s'} = 1$ and thus $\s' \in W_2$, a contradiction.
\item
  Let $\obj \eqdef \Term(j)\,\cup\, \Oparity$.
  For \cref{eq:valTermcapFW = valFW}, the first inequality
  $\valueof[][\obj\,\cap\, \eventually W_0]{\s} \le \valueof[][\eventually
  W_0]{\s}$ is trivial, since $\obj\, \cap\, \eventually\,W_0 \subseteq \eventually W_0$.
  To show the reverse inequality, consider the strategy $\zstrat$
  that first plays like an optimal MD strategy $\zstrat'$ for the objective
  $\eventually\,W_0$ and after reaching $W_0$ switches to an almost surely winning MD
  strategy $\zstrat''$ for the objective $\Loss$.
  Then
  $
  \valueof[][\obj\,\cap\,\eventually W_0]{\s}
  \ge
  \Prob[]{\zstrat,\s}(\obj\,\cap\,\eventually \,W_0)
  \ge
  \Prob[]{\zstrat,\s}(\Loss\,\cap\,\eventually \,W_0)
  =
  \Prob[]{\zstrat',\s}(\eventually \,W_0)
  =
  \valueof[][\eventually W_0]{\s}
  $,
  where the second inequality is due to $\LimInf{=}{-\infty}\subseteq \Term(j)$.

  For~\cref{eq:nus+extra}, the first inequality is again due to the fact
  that $\LimInf{=}{-\infty}\subseteq \Term(j)$ for all $j \ge 0$.
  Towards the second inequality of \cref{eq:nus+extra} we have
  \begin{align*}
    & \valueof[][\obj]{\s} \\
    & = \sup_{\zstrat}\Prob[]{\zstrat,\s}(\obj)\\ 
    & = \sup_{\zstrat}\left(\Prob{\zstrat,\s}\left(\obj\, \cap\, \eventually\, {W_0}\right) + \Prob{\zstrat,\s}\left(\obj\, \cap\,
      \complementof{\eventually\, {W_0}}\right)  \right)        & \lrc{\textrm{Law of total probability}} \nonumber \\
    & \le \sup_{\zstrat}\Prob{\zstrat,\s}\left(\obj\, \cap\, \eventually\, {W_0}\right) + \sup_{\zstrat}\Prob{\zstrat,\s}\left(\obj\, \cap\, \complementof{\eventually\, {W_0}}\right) & \lrc{\sup\lrc{f+g} \le \sup f + \sup g} \nonumber \\
    & = \sup_{\zstrat}\Prob{\zstrat,\s}\left(\eventually\, {W_0}\right)+\sup_{\zstrat}\Prob{\zstrat,\s} \left(\obj\, \cap\,
      \complementof{\eventually\, {W_0}}\right)                                  & \lrc{\cref{eq:valTermcapFW = valFW}} \nonumber          \\
    & \le \valueof[][\Loss]{\s}  + \sup_{\zstrat}\Prob{\zstrat,\s}
      \left(\obj\, \cap\, \complementof{\eventually\, {W_0}}\right)
    & \lrc{\cref{itm:Fw lessequal nus}}
  \end{align*}
  We can upper-bound the second summand above as follows.
  \begin{align*}
    & \sup_{\zstrat}\Prob{\zstrat,\s} (\obj\, \cap\,  \complementof{\eventually\, {W_0}}) \\
    & = \sup_{\zstrat}\Prob{\zstrat,\s} \left((\Term(j) \, \cup \, \Oparity)\, \cap\, \complementof{\eventually\, {W_0}}\right)\\
    & \le \sup_{\zstrat}\Prob{\zstrat,\s} \left(\Term(j)\, \cap\,
      \complementof{\eventually\, {W_0}}\right) +
      \sup_{\zstrat}\Prob{\zstrat,\s} \left(\Oparity\, \cap\,
      \complementof{\eventually\, {W_0}}\right)    & \lrc{\textrm{Union bound}} \nonumber   \\
    & \le \sup_{\zstrat}\Prob{\zstrat,\s} \left(\Term(j)\, \cap\, \complementof{\eventually\, {W_1}}\right) + \sup_{\zstrat}\Prob{\zstrat,\s} \left(\Oparity\, \cap\, \complementof{\eventually\, {W_2}}\right) & \lrc{\cref{itm:Wi sub W}}\quad\quad \nonumber                         \\
    & = \sup_{\zstrat}\Prob{\zstrat,\s} \left(\Term(j)\, \cap\, \complementof{\eventually\, {W_1}}\right) & \lrc{\cref{itm:val_parcapcompW2 = 0}}\quad\quad\qedhere
  \end{align*}
\end{enumerate}
\end{proof}

We show that the term
$\sup_{\zstrat}\Prob[]{\zstrat,\s} \left(\Term(j)\, \cap \, \complementof{\eventually\, {W_1}}\right)$
in \cref{eq:nus+extra} can be made arbitrarily small for large $j$.
To this end, we use \cite[Lemma~3.9]{BBEK:IC2013} (adapted to our notation).

\begin{lemma}~\cite[Lemma~3.9 and Claim~6]{BBEK:IC2013} \label{lem:exp_bound_term_value}
Let $\mdp = \mdptuple$ be a maximizing finite MDP with rewards in unary and
$W_1 \eqdef \AS{\LimInf{=}{-\infty}}$.
One can compute, in polynomial time, a rational constant $c < 1$,
and an integer $h \ge 0$
such that for all $j \ge h$
and $\s \in \states$
	\[
		\sup_{\zstrat}\Prob{\zstrat,\s}\lrc{\Term(j) \, \cap\,
                  \complementof{\eventually\,W_1}} \le
                \frac{c^j}{1-c}.
	\]
Moreover, $1/(1-c) \in {\Ocompl}\left(\exp(\size{\mdp}^{\Ocompl(1)})\right)$ and
$h \in {\Ocompl}(\exp\lrc{\size{\mdp}^{{\Ocompl}(1)}})$.
\end{lemma}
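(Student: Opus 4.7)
The statement is essentially an adaptation of \cite[Lemma~3.9 and Claim~6]{BBEK:IC2013} to our notation, and the plan is to reduce the question to that cited result. First I would build a modified MDP $\mdp'$ from $\mdp$ by replacing every state of $W_1$ with a non-terminating absorbing state, realised concretely by a self-loop of reward $+1$. Because a play of $\mdp$ satisfies $\Term(j) \cap \complementof{\eventually W_1}$ iff the corresponding play of $\mdp'$ avoids the (former) $W_1$ states and terminates, we immediately obtain
\[
\sup_{\zstrat}\Prob[\mdp]{\zstrat,\s}\lrc{\Term(j)\,\cap\,\complementof{\eventually W_1}} = \sup_{\zstrat}\Prob[\mdp']{\zstrat,\s}\lrc{\Term(j)},
\]
so it suffices to bound the right-hand side uniformly in $\s$.

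Next I would verify that no state of $\mdp'$ lies in $\AS[\mdp']{\LimInf{=}{-\infty}}$: any strategy witnessing almost-sure divergence in $\mdp'$ must avoid the modified $W_1$ states (whose $+1$ self-loops preclude $\liminf = -\infty$) and can therefore be replayed verbatim in $\mdp$ to force its starting state into $W_1$, contradicting the construction of $\mdp'$. With this hypothesis in hand, I would subdivide each transition of weight $c$ in $\mdp'$ into a chain of $|c|$ unit-weight transitions; because the rewards are given in unary, the result is an equivalent one-counter MDP of polynomial size, and the ``no a.s.\ losing state'' property survives the subdivision since the new intermediate states are deterministic and cannot create any new a.s.\ divergent behaviour.

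I would then invoke \cite[Lemma~3.9]{BBEK:IC2013} on this one-counter MDP to extract a rational $c < 1$ and an integer $h \ge 0$, both computable in polynomial time, satisfying the claimed exponential tail bound $c^j/(1-c)$ on the maximum termination probability from counter value $j \ge h$. The explicit size estimates $1/(1-c), h \in \Ocompl(\exp(\size{\mdp}^{\Ocompl(1)}))$ come directly from \cite[Claim~6]{BBEK:IC2013}. Translating these conclusions back through the preceding reductions yields the lemma, since $W_1$ itself is computable in polynomial time by \cref{rem:ssg-md}.

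The genuine technical content of the proof lives inside the cited lemma, whose argument uses a martingale/potential-function analysis to establish exponential decay of the maximum termination probability in OC-MDPs that lack almost-surely losing states. Accordingly, the main obstacle in my plan is not a fresh mathematical argument but careful bookkeeping: checking that both reductions (isolating $W_1$, then subdividing weighted transitions) preserve the termination probability together with the crucial ``no a.s.\ divergence'' hypothesis, and tracking how the constants $c$ and $h$ of the cited lemma behave under the polynomial blowup caused by the unary weight subdivision.
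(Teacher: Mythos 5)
This lemma is stated in the paper as a direct adaptation of \cite[Lemma~3.9 and Claim~6]{BBEK:IC2013} and is not given an independent proof there (the appendix only revisits its internal structure to track the constants for binary rewards), so your proposal, which likewise defers the technical core to that citation and supplies routine, correct reductions (absorbing the $W_1$ states, then subdividing unary-weight transitions), is essentially the same approach. The only nitpick is that your first reduction yields $\sup_{\zstrat}\Prob[\mdp]{\zstrat,\s}\lrc{\Term(j)\cap\complementof{\eventually W_1}} \le \sup_{\zstrat}\Prob[\mdp']{\zstrat,\s}\lrc{\Term(j)}$ rather than equality (a play may terminate first and visit $W_1$ afterwards), but the inequality goes in the direction needed for the upper bound.
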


\bigskip
\begin{restatable}{lemma}{lemMDPN}\label{lem:MDP-N}
Consider a maximizing MDP $\mdp = \mdptuple$, $\eps >0$ and the constants $c,h$ from \cref{lem:exp_bound_term_value}.
For rewards in unary and $i \ge N$ we have
$\valueof[\mdp][\Term(i)\,\cup\, \Oparity]{\s} - \valueof[\mdp][\Loss]{\s} \le
\eps$ where
$N \eqdef \max\lrc{h,\ceil{\log_c\lrc{\eps \cdot \lrc{1-c}}}} \in {\Ocompl}\left(\exp(\size{\mdp}^{\Ocompl(1)})\cdot \log\lrc{1/\eps}\right)$.

For rewards in binary we have
$N \in {\Ocompl}\left(\exp(\exp(\size{\mdp}^{\Ocompl(1)}))\cdot
  \log\lrc{1/\eps}\right)$, i.e.,
the size of $N$ increases by one exponential.
\end{restatable}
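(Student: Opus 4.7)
The plan is to combine \cref{eq:nus+extra} of \cref{lem:W0-W1-W2} with the exponential tail bound of \cref{lem:exp_bound_term_value} and then solve for the smallest $j$ that drives the ``extra'' term below $\eps$.

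By \cref{eq:nus+extra}, for every $j \ge 0$ and every state $\s$,
\[
\valueof[\mdp][\Term(j)\,\cup\,\Oparity]{\s} - \valueof[\mdp][\Loss]{\s} \le \sup_{\zstrat}\Prob[]{\zstrat,\s}\lrc{\Term(j)\,\cap\,\complementof{\eventually\,W_1}}.
\]
For $j \ge h$, \cref{lem:exp_bound_term_value} bounds this supremum by $c^j/(1-c)$, where $c \in (0,1)$ and $h$ are computable in polynomial time and $1/(1-c), h \in \Ocompl(\exp(\size{\mdp}^{\Ocompl(1)}))$. Solving $c^j/(1-c) \le \eps$ for $j$ (we may assume $\eps < 1$, since the value difference is always at most $1$) yields $j \ge \log_c(\eps(1-c))$. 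Taking $N \eqdef \max\lrc{h,\ceil{\log_c(\eps(1-c))}}$ therefore establishes the desired inequality for all $i \ge N$.

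It remains to bound the size of $N$. Using the elementary inequality $-\log c \ge 1-c$ valid for $c \in (0,1)$,
\[
\log_c(\eps(1-c)) = \frac{-\log\eps + \log\lrc{\tfrac{1}{1-c}}}{-\log c} \le \frac{\log(1/\eps) + \log\lrc{\tfrac{1}{1-c}}}{1-c}.
\]
Since $1/(1-c)$ is at most exponential in $\size{\mdp}$, the term $\log(1/(1-c))$ is only polynomial in $\size{\mdp}$ and is absorbed by the outer exponential factor $1/(1-c)$. Combined with the analogous exponential bound on $h$, this yields $N \in \Ocompl(\exp(\size{\mdp}^{\Ocompl(1)})\cdot\log(1/\eps))$ in the unary case. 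For binary rewards both $h$ and $1/(1-c)$ pick up an additional exponential in \cref{lem:exp_bound_term_value}, and this propagates directly to the claimed doubly-exponential bound on $N$.

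The argument is essentially a routine substitution; the one subtlety to watch for is manipulating $\log_c$ so that the dependence on $\eps$ appears additively as $\log(1/\eps)$ rather than multiplicatively, which is precisely what keeps the $\log(1/\eps)$ factor outside of the exponential in $\size{\mdp}$.
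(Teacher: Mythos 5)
Your treatment of the unary case is correct and follows essentially the same route as the paper: the same combination of \cref{eq:nus+extra} with \cref{lem:exp_bound_term_value}, the same choice of $N$, and even the same elementary inequality $-\ln c \ge 1-c$ to turn $\log_c(\eps(1-c))$ into an expression where $\log(1/\eps)$ appears as a multiplicative factor outside the exponential in $\size{\mdp}$. No issues there.

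The binary case, however, has a gap. You write that ``both $h$ and $1/(1-c)$ pick up an additional exponential in \cref{lem:exp_bound_term_value}'', but that lemma is stated \emph{only} for rewards in unary, so it cannot be invoked directly to supply binary-reward constants. Some actual argument is needed to justify the extra exponential. The paper does this by going back into the proof of \cite[Lemma~3.9]{BBEK:IC2013}: the construction there first builds a ``rising'' MDP $\mdp'$ and a linear program whose optimal solution yields $c$ and $h$; with unary rewards $\size{\mdp'}$ is polynomial in $\size{\mdp}$, but with binary rewards the construction of $\mdp'$ blows up so that $\size{\mdp'} \in \Ocompl\lrc{2^{\size{\mdp}^{\Ocompl(1)}}}$, while the bounds on $c$ and $h$ in terms of $\size{\mdp'}$ are unchanged --- and this is what propagates the extra exponential into $N$. (Equivalently, one can encode binary rewards into unary in an exponentially larger MDP and apply the unary case there.) Your conclusion is right, but as written the binary half rests on applying a lemma outside its stated hypotheses; you need to add one of these two reductions to close it.
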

\begin{proof}[Proof sketch.] 
  For rewards in unary, the result follows from
  \cref{lem:W0-W1-W2}(\cref{eq:nus+extra}) and \cref{lem:exp_bound_term_value}.
  For rewards in binary, the constants increase by one exponential via
  encoding binary rewards into unary rewards in a modified MDP.
\end{proof}

\subsection{Computing \texorpdfstring{$N$}{N} for SSGs}\label{subsec:N-for-SSG}
In order to compute the bound $N$ for an SSG $\game$,
we first consider bounds $N(\s)$ for individual states $\s$
and then take their maximum.
Given a state $\s$,
we can use \cref{thm:Gain 2-player games}(\cref{thm:Gain 2-player games-3})
to obtain an optimal FD strategy (with $\Ocompl(\exp(|\game|^{\Ocompl(1)}))$ memory modes)
$\optzstrat(\s) = \memstrattuple$ for Maximizer from state $\s$ w.r.t.\ the
$\Gain$ objective.
\cref{thm:Gain 2-player games}(\cref{thm:Gain 2-player games-1})
yields a uniform MD strategy $\optostrat$
that is optimal for Minimizer from all states $\s$ w.r.t.\ the
$\Gain$ objective.

\begin{restatable}{lemma}{lemGameN}\label{lem:Game-N}
Given an SSG $\game=\gametuple$ and $\eps >0$, we can compute a number $N \in \N$
such that for all $i \ge N$ and states $\s \in \states$ we have
\begin{equation}\label{lem:Game-N-a}
  \valueof[\game][\en(i)\, \cap\, \Eparity]{\s} - \eps
  \le
  \valueof[\game][\Gain]{\s} - \eps
  \le
  \inf_\ostrat \Prob[\game]{\optzstrat(\s),\ostrat,\s}\lrc{\en(i)\,\cap\,\Eparity}
  \le
  \valueof[\game][\en(i)\,\cap\,\Eparity]{\s}
\end{equation}
i.e., $\optzstrat(\s)$ is $\eps$-optimal for Maximizer for $\en(i)\,\cap\,\Eparity$ for all $i \ge N$.
In particular, $0 \le \valueof[\game][\Gain]{\s} - \valueof[\game][\en(i)\,\cap\,\Eparity]{\s} \le \eps$.

Moreover, $\optostrat$ is $\eps$-optimal for Minimizer from any state $\s$ for $i \ge N$.
\begin{equation}\label{lem:Game-N-b}
  \sup_{\zstrat}\Prob[\game]{\zstrat,\optostrat,\s}\lrc{\en(i)\,\cap\,\Eparity}
  \le
  \sup_{\zstrat}\Prob[\game]{\zstrat,\optostrat,\s}\lrc{\Gain}
  =
  \valueof[\game][\Gain]{\s}
  \le
  \valueof[\game][\en(i)\,\cap\,\Eparity]{\s} + \eps
\end{equation}
For rewards in unary, $N$ is doubly exponential, i.e.,
$N \in {\Ocompl}\left(\exp(\exp(\size{\game}^{\Ocompl(1)}))\cdot \log\lrc{1/\eps}\right)$
and it can be computed in exponential time.
For rewards in binary, the size of $N$ and its computation time increase
by one exponential, respectively.
\end{restatable}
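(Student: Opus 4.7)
My plan is to reduce both inequality chains to the maximizing-MDP result of \cref{lem:MDP-N} by fixing each player's $\Gain$-optimal strategy in turn, using \cref{lem:MDP_to_game} to transfer values between the SSG and the induced MDP, and using the dual MDP to switch between the $\Loss/\Term\cup\Oparity$ form (needed by \cref{lem:MDP-N}) and the $\Gain/\en\cap\Eparity$ form (needed here). The ingredients from \cref{thm:Gain 2-player games} are the FD strategy $\optzstrat(\s)$ for Maximizer, with $\Ocompl(\exp(|\game|^{\Ocompl(1)}))$ memory modes, and the uniform MD strategy $\optostrat$ for Minimizer.

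\textbf{Maximizer side (\cref{lem:Game-N-a}).} Fix a state $\s$ and consider the minimizing MDP $\game^{\optzstrat(\s)}$ on $\memconfset\times\states$ obtained by fixing $\optzstrat(\s)=\memstrattuple$. Optimality of $\optzstrat(\s)$ from $\s$ for $\Gain$ together with \cref{lem:MDP_to_game} gives $\valueof[\game^{\optzstrat(\s)}][\Gain]{\tuple{\initmem,\s}}=\valueof[\game][\Gain]{\s}$. Passing to the dual maximizing MDP $(\game^{\optzstrat(\s)})^{d}$, \cref{lem:MDP-N} applies and yields a threshold $N(\s)$ such that for all $i\ge N(\s)$
\[
\valueof[(\game^{\optzstrat(\s)})^{d}][\Term(i)\cup\Oparity]{\tuple{\initmem,\s}} - \valueof[(\game^{\optzstrat(\s)})^{d}][\Loss]{\tuple{\initmem,\s}} \le \eps.
\]
Using the duality identity $\valueof[\mdp][\obj]{\cdot}+\valueof[\mdp^{d}][\complementof{\obj}]{\cdot}=1$, this rewrites to $\valueof[\game][\Gain]{\s} - \inf_{\ostrat}\Prob[\game]{\optzstrat(\s),\ostrat,\s}(\en(i)\cap\Eparity) \le \eps$, which is the middle step of \cref{lem:Game-N-a}. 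The flanking inequalities are immediate: $\en(i)\cap\Eparity\subseteq\Gain$ (because $\en(i)$ keeps the reward sum above $-i$, so $\LimInf{>}{-\infty}$ holds), and $\inf_{\ostrat}\Prob[\game]{\optzstrat(\s),\ostrat,\s}(\obj)\le\valueof[\game][\obj]{\s}$ trivially.

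\textbf{Minimizer side (\cref{lem:Game-N-b}).} The same inclusion $\en(i)\cap\Eparity\subseteq\Gain$ and the optimality of $\optostrat$ for $\Gain$ from every $\s$ give directly $\sup_{\zstrat}\Prob[\game]{\zstrat,\optostrat,\s}(\en(i)\cap\Eparity)\le\sup_{\zstrat}\Prob[\game]{\zstrat,\optostrat,\s}(\Gain)=\valueof[\game][\Gain]{\s}$. The final inequality $\valueof[\game][\Gain]{\s}\le\valueof[\game][\en(i)\cap\Eparity]{\s}+\eps$ is already supplied by the Maximizer-side bound. Setting $N\eqdef\max_{\s\in\states}N(\s)$ produces a single threshold that works uniformly in the start state.

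\textbf{Main obstacle.} The quantitative price of lifting from MDPs to SSGs is captured by the size of $(\game^{\optzstrat(\s)})^{d}$, namely $|\memconfset|\cdot|\game|$, and by \cref{thm:Gain 2-player games}(\cref{thm:Gain 2-player games-3}) we have $|\memconfset|\in\Ocompl(\exp(|\game|^{\Ocompl(1)}))$; the same result shows this blow-up is unavoidable, since exponential memory is already necessary for $\Gain$. Feeding an exponentially larger MDP into \cref{lem:MDP-N} adds exactly one exponential to its output, yielding $N\in\Ocompl(\exp(\exp(|\game|^{\Ocompl(1)}))\cdot\log(1/\eps))$ for unary rewards, and one further exponential for binary rewards. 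Computing $N$ reduces to constructing $\optzstrat(\s)$ (exponential time via \cref{thm:Gain 2-player games}), assembling the derived MDP, and executing the polynomial-time procedure inside \cref{lem:exp_bound_term_value}, which together give an exponential-time algorithm for unary rewards and a doubly exponential one for binary rewards.
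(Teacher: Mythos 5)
Your proposal is correct and follows essentially the same route as the paper's proof: fix the $\Gain$-optimal FD Maximizer strategy $\optzstrat(\s)$, pass to the induced minimizing MDP and its dual, invoke \cref{lem:MDP-N} there, translate back by the duality identity, take $N=\max_\s N(\s)$, and handle the Minimizer side via the inclusion $\en(i)\cap\Eparity\subseteq\Gain$ and the uniform optimality of $\optostrat$. The complexity accounting (exponential derived MDP fed into \cref{lem:MDP-N} giving a doubly exponential $N$, plus one more exponential for binary rewards) also matches the paper.
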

\begin{proof}
Assume that rewards are in unary.
The first inequality of \eqref{lem:Game-N-a} holds because 
$\en(i)\, \cap\, \Eparity \subseteq \Gain$ for any $i$.
The third inequality of \eqref{lem:Game-N-a} follows from the definition of
the value.
Towards the second inequality of \eqref{lem:Game-N-a},
we consider the minimizing MDP
$\mdp(\s) \eqdef \game^{\optzstrat(\s)}$ 
obtained by fixing the Maximizer strategy $\optzstrat(\s)$.
Since $\optzstrat(\s)$ is optimal for Maximizer from state $\s$
wrt.\ the objective $\Gain$, \Cref{lem:MDP_to_game} yields that
\begin{equation}\label{eq:lem:Game-N-1}
\valueof[\game][\Gain]{\s} = \valueof[\mdp(\s)][\Gain]{\tuple{\initmem,\s}}.
\end{equation}
Since $\optzstrat(\s)$ has $\Ocompl(\exp(|\game|^{\Ocompl(1)}))$ memory modes,
the size of $\mdp(\s)$ is exponential in $|\game|$ and $\mdp(\s)$ can be computed in
exponential time.

Now we consider the dual maximizing MDP $\mdp(\s)^d$ and the objectives
$\Term(i)\, \cup\, \Oparity$ and $\Loss$. (Note that $\mdp(\s)^d$ has the same
size as $\mdp(\s)$.)
From \Cref{lem:MDP-N}, we obtain
a bound $N(\s) \in \N$ such that for all $i \ge N(\s)$
\begin{equation}\label{eq:lem:Game-N-2}
0\le \valueof[\mdp(\s)^d][\Term(i)\, \cup\, \Oparity]{\tuple{\initmem,\s}} -\valueof[\mdp(\s)^d][\Loss]{\tuple{\initmem,\s}} \le \eps.
\end{equation}
By \Cref{lem:MDP-N} and \Cref{lem:exp_bound_term_value},
$N(\s)$ is exponential in $|\mdp(\s)^d|$ and thus doubly exponential
in $|\game|$, i.e.,
$N(\s) \in {\Ocompl}\left(\exp(\exp(\size{\game}^{\Ocompl(1)}))\cdot \log\lrc{1/\eps}\right)$.
Moreover, $N(\s)$ can be computed in time polynomial in $|\mdp(\s)^d|$
and thus in time exponential in $|\game|$.
By duality, we can rewrite \Cref{eq:lem:Game-N-2} for $\mdp(\s)$ as follows.
For all $i \ge N(\s)$
\begin{equation}\label{eq:lem:Game-N-3}
  0
  \le
  \valueof[\mdp(\s)][\Gain]{\tuple{\initmem,\s}} -
  \valueof[\mdp(\s)][\en(i)\, \cap\, \Eparity]{\tuple{\initmem,\s}}
  \le
  \eps.
\end{equation}
  In order to get a uniform upper bound that holds for all states,
  let $N \eqdef \max_{\s \in \states} N(\s)$.
  Since $|\states|$ is linear, we still have 
  $N \in {\Ocompl}\left(\exp(\exp(\size{\game}^{\Ocompl(1)}))\cdot \log\lrc{1/\eps}\right)$
  and it can be computed in exponential time in $|\game|$.
  Finally, we can show the second inequality of \eqref{lem:Game-N-a}.
  \begin{align*}
    & \inf_\ostrat \Prob[\game]{\optzstrat(\s),\ostrat,\s}\lrc{\en(i)\,\cap\,\Eparity}\\
    & = \inf_\ostrat \Prob[\mdp(\s)]{\ostrat,\tuple{\initmem,\s}}\lrc{\en(i)\,\cap\,\Eparity}\\
    & = \valueof[\mdp(\s)][\en(i)\,\cap\,\Eparity]{\tuple{\initmem,\s}}\\
    & \ge \valueof[\mdp(\s)][\Gain]{\tuple{\initmem,\s}} - \eps
    & \mbox{by $i \ge N \ge N(\s)$ and \Cref{eq:lem:Game-N-3}}\\
    & = \valueof[\game][\Gain]{\s} - \eps
    & \mbox{by \eqref{eq:lem:Game-N-1}}
  \end{align*}

 The first inequality of \eqref{lem:Game-N-b} holds because 
 $\en(i)\, \cap\, \Eparity \subseteq \Gain$ for any $i$.
 The equality in \eqref{lem:Game-N-b} holds by the
 optimality of $\optostrat$.
 The second inequality of \eqref{lem:Game-N-b} follows from the previously
 stated consequence of \eqref{lem:Game-N-a}.
   
For rewards in binary, the sizes of the numbers $N(\s)$ (and hence $N$)
and the time to compute it increase by one exponential by \Cref{lem:MDP-N}.
\end{proof}

\section{Unfolding the Game to Energy Level  \texorpdfstring{$N$}{N}}\label{sec: Unfolding till N}

Given an SSG $\game = \gametuple$ and error tolerance $\eps >0$,
for each state $\s \in \states$ and energy level $i \ge 0$,
we want to compute a rational number
$v'$ which satisfies
$0 \le v'- \valueof[\game][\en(i)\, \cap \, \Eparity]{\s} \le \eps$,
and $\eps$-optimal FD strategies $\zstrat_\eps$ and $\ostrat_\eps$ for
Maximizer and Minimizer, resp.
We achieve this by constructing a finite-state parity game $\game'$
that closely approximates the original game $\game$,
as described in \Cref{sec:result}(\cref{itm:Unfolding}).

For clarity, we explain the construction in two steps.
In the first step, we consider a finite-state parity game $\game\lrb{N}$.
(Unlike $\game'$, the game $\game\lrb{N}$ is not actually constructed.
It just serves as a part of the correctness proof.)
$\game\lrb{N}$ encodes the energy level up-to $N+R$ (where $R$ is the maximal
transition reward) into the states, i.e., it has states
of the form $(\s,k)$ with $k \le N+R$.
It imitates the original game $\game$ till energy level $N+R$,
but at any state $\lrc{\s,i}$ with energy level $i \ge N$ it jumps
to a winning state with probability $\valueof[\game][\en(i)\, \cap \,
\Eparity]{\s}$ and to a losing state with probability
$1-\valueof[\game][\en(i)\, \cap \, \Eparity]{\s}$.
(We need the margin up-to $N+R$, because transitions can have rewards $>1$, so
the level $N$ might not be hit exactly.)
Similarly, at states $\lrc{\s,0}$ with energy level $0$,
we jump to a losing state.
The coloring function in the new game $\game\lrb{N}$
derives its colors from the colors in the original game
$\game$, i.e., all states $\lrc{\s,i}$ have the same color as $\s$ in $\game$.

By construction of $\game\lrb{N}$, for $i \le N$, the $\Eparity$ value of $\lrc{\s,i}$ in
$\game\lrb{N}$ coincides with $\valueof[\game][\en(i)\, \cap\, \Eparity]{\s}$.

In the second step, since we do not know the exact values
$\valueof[\game][\en(i)\, \cap \, \Eparity]{\s}$ for $N+R \ge i > N$,
we approximate these by the slightly larger $\valueof[\game][\Gain]{\s}$.
I.e., we modify $\game\lrb{N}$ by replacing the probability values
$\valueof[\game][\en(i)\, \cap \, \Eparity]{\s}$ for the jumps to the winning
state by $\valueof[\game][\Gain]{\s}$. Let $\game'$ be the resulting
finite-state parity game.
It follows from \Cref{lem:Game-N} that
$0 \le \valueof[\game][\Gain]{\s} -
\valueof[\game][\en(i)\,\cap\,\Eparity]{\s} \le \eps$ for $i \ge N$ and
$\limval[\game][\en\, \cap \, \Eparity]{\s} = \valueof[\game][\Gain]{\s}$.
Thus $\game'$ $\eps$-over-approximates $\game\lrb{N}$ and $\game$, and we obtain the
following lemma.

\begin{lemma}\label{lem:approxenparvalueslessthanN}
For all states $\s$ and all $0 \leq i \leq N$
\begin{align*}
        &\valueof[\game\lrb{N}][\Eparity]{\lrc{\s,i}} =
          \valueof[\game][\en(i)\, \cap \, \Eparity]{\s}, \mbox{and}\\
        & 0 \le \valueof[\game^\prime][\Eparity]{\lrc{\s,i}} - \valueof[\game\lrb{N}][\Eparity]{\lrc{\s,i}} \leq \eps.
    \end{align*}
\end{lemma}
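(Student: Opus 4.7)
The plan is to exploit the observation that, in both $\game\lrb{N}$ and $\game'$, the dynamics up to the first boundary hit coincide exactly with those of $\game$, and only the post-boundary continuation is replaced by a single Bernoulli draw. Fix a start state $(\s,i)$ with $0\le i\le N$, and let $\tau$ denote the first time the energy trajectory leaves the open range $(0,N)$. This splits into the event $T$ that energy first drops to $\le 0$ and events $B_{(\s',j)}$ that the play first reaches a state $(\s',j)$ with $N\le j\le N+R$. Because the up-to-$\tau$ structure agrees across all three games, I would first show that for any strategy pair $(\zstrat,\ostrat)$ the quantities $P_{(\s',j)}^{\zstrat,\ostrat}\eqdef\Prob{\zstrat,\ostrat,(\s,i)}(B_{(\s',j)})$ and $P_\infty^{\zstrat,\ostrat}\eqdef\Prob{\zstrat,\ostrat,(\s,i)}(\tau=\infty\wedge\Eparity)$ are the same in all three games; this pointwise identity drives everything else.

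For the first equality, I would decompose the $\Eparity$-probability in $\game\lrb{N}$ at $\tau$ as
\[
\Prob[\game\lrb{N}]{\zstrat,\ostrat,(\s,i)}(\Eparity) = P_\infty^{\zstrat,\ostrat} + \sum_{(\s',j)} P_{(\s',j)}^{\zstrat,\ostrat}\cdot\valueof[\game][\en(j)\cap\Eparity]{\s'},
\]
noting that $T$ is losing and the bailout at $(\s',j)$ wins with exactly that probability by construction. For $\game$ I would establish the matching identity
\[
\valueof[\game][\en(i)\cap\Eparity]{\s} = \sup_\zstrat\inf_\ostrat\lrc{P_\infty^{\zstrat,\ostrat} + \sum_{(\s',j)} P_{(\s',j)}^{\zstrat,\ostrat}\cdot\valueof[\game][\en(j)\cap\Eparity]{\s'}}
\]
by pasting, for each boundary state $(\s',j)$, an $\eps_1$-optimal continuation strategy onto any first-stage behaviour (using the strong Markov property at $\tau$) and then sending $\eps_1\to 0$ to obtain matching upper and lower bounds. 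Equality of the two sup-inf expressions yields $\valueof[\game\lrb{N}][\Eparity]{(\s,i)}=\valueof[\game][\en(i)\cap\Eparity]{\s}$.

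For the second bound, I would observe that $\game\lrb{N}$ and $\game'$ differ only in the bailout probability at boundary states: $\valueof[\game][\Gain]{\s'}$ in $\game'$ versus $\valueof[\game][\en(j)\cap\Eparity]{\s'}$ in $\game\lrb{N}$. By \Cref{lem:Game-N}, for every $j\ge N$ we have $0\le\valueof[\game][\Gain]{\s'}-\valueof[\game][\en(j)\cap\Eparity]{\s'}\le\eps$. Using the same decomposition at $\tau$, this yields the uniform-in-strategies estimate
\[
0 \le \Prob[\game']{\zstrat,\ostrat,(\s,i)}(\Eparity) - \Prob[\game\lrb{N}]{\zstrat,\ostrat,(\s,i)}(\Eparity) = \sum_{(\s',j)} P_{(\s',j)}^{\zstrat,\ostrat}\cdot\lrc{\valueof[\game][\Gain]{\s'}-\valueof[\game][\en(j)\cap\Eparity]{\s'}} \le \eps,
\]
since $\sum_{(\s',j)} P_{(\s',j)}^{\zstrat,\ostrat}\le 1$. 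Applying $\sup_\zstrat\inf_\ostrat$ to both ends preserves these bounds, giving $0\le\valueof[\game'][\Eparity]{(\s,i)}-\valueof[\game\lrb{N}][\Eparity]{(\s,i)}\le\eps$.

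The hard part will be the strategy-stitching step for $\game$ in part 1: I must justify that any first-stage behaviour of $(\zstrat,\ostrat)$ up to $\tau$ can be freely combined with near-optimal continuation strategies at each boundary state, and that the resulting pasted strategies realise the decomposed probability. This is in principle a routine consequence of the strong Markov property and the existence of $\eps_1$-optimal strategies for $\en(j)\cap\Eparity$ at each $(\s',j)$, but it has to be stated carefully, including the treatment of plays with $\tau=\infty$ on which no continuation is ever invoked. Part 2 will then be largely bookkeeping once the decomposition framework is in place.
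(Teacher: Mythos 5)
Your proposal is correct and matches the argument the paper intends: the paper treats this lemma as essentially immediate from the construction of $\game\lrb{N}$ together with \Cref{lem:Game-N}, and the underlying justification is exactly your decomposition at the first boundary-hitting time plus strategy pasting with $\eps_1$-optimal continuations (the paper's appendix draft of the detailed proof uses the same first-passage events and the same split into in-range plays versus boundary hits). The only discrepancy is minor bookkeeping: in the paper's construction the bailout occurs at energy levels $j$ with $N<j\le N+R$ while level $N$ itself is still played normally, so your stopping time $\tau$ should be the first exit from $[1,N]$ rather than from the open range $(0,N)$.
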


Now we are ready to prove the main theorem.

\thmapproxenpar*
\begin{proof}
For $i > N$ we output $v' = \valueof[\game][\Gain]{\s}$, which satisfies the
condition by \Cref{lem:Game-N}.
For $i \le N$ we output $v' = \valueof[\game'][\Eparity]{\lrc{\s,i}}$, which
satisfies the condition by \Cref{lem:approxenparvalueslessthanN}.
By \Cref{thm:Gain 2-player games}, the values $\valueof[\game][\Gain]{\s}$ are
rational for all states $\s$. Therefore all probability values in $\game'$ are
rational and thus the $\Eparity$ values of all states in $\game'$ are rational. 
Hence our numbers $v'$ are always rational.

By \Cref{thm:Gain 2-player games}, the values $\valueof[\game][\Gain]{\s}$ for
all states $\s \in \states$ can be computed in exponential time.
By \Cref{lem:Game-N}, $N \in
{\Ocompl}\left(\exp(\exp(\size{\game}^{\Ocompl(1)}))\cdot \log\lrc{1/\eps}\right)$
is doubly exponential.
Therefore, we can construct $\game'$ in 
${\Ocompl}\left(\exp(\exp(\size{\game}^{\Ocompl(1)}))\cdot
  \log\lrc{1/\eps}\right)$ time and space.
Questions about the parity values of states in $\game'$ can be decided in
nondeterministic time polynomial in $|\game'|$. Thus the numbers $v'$ are computed in 2-\NEXPTIME.

Towards Item 2, we construct $\eps$-optimal FD strategies $\zstrat_\eps$
for Maximizer (resp.\ $\ostrat_\eps$ for Minimizer) for
$\en(i)\,\cap\,\Eparity$ in $\game$.
Let $\hat{\zstrat}$
(resp.\ $\hat{\ostrat}$)
be optimal MD strategies
for Maximizer (resp.\ Minimizer) for the objective $\Eparity$ in $\game'$,
which exist by \Cref{rem:ssg-md}.
Since these strategies are MD, they can be guessed in nondeterministic time
polynomial in the size $|\game'|$, and thus 
in ${\Ocompl}\left(\exp(\exp(\size{\game}^{\Ocompl(1)}))\cdot
  \log\lrc{1/\eps}\right)$ nondeterministic time.

Then $\zstrat_\eps$ plays as follows.
While the current energy level $j$ ($i$ plus the sum of the rewards so far)
stays $<N$, then, at any state $\s'$, play like $\hat{\zstrat}$ at state
$(\s',j)$ in $\game'$.
Once the energy level reaches a value $\ge N$ at some state $\s'$
for the first time, then play like $\optzstrat(s')$ forever.
(Recall that $\optzstrat(s')$ is the optimal FD Maximizer strategy for $\Gain$
from state $s'$ from \cref{subsec:N-for-SSG}.)
$\zstrat_\eps$ is $\eps$-optimal by \Cref{lem:approxenparvalueslessthanN}
and \Cref{lem:Game-N}.
It needs to remember the energy level up-to $N$ while simulating
$\hat{\zstrat}$. Moreover, $\optzstrat(s')$ needs
$\Ocompl(\exp(|\game|^{\Ocompl(1)}))$ memory modes by \Cref{thm:Gain 2-player games}.
Finally, it needs to remember the switch from $\hat{\zstrat}$ to $\optzstrat(s')$.
Since $N \in
{\Ocompl}\left(\exp(\exp(\size{\game}^{\Ocompl(1)}))\cdot \log\lrc{1/\eps}\right)$
dominates the rest, $\zstrat_\eps$ uses
${\Ocompl}\left(\exp(\exp(\size{\game}^{\Ocompl(1)}))\cdot \log\lrc{1/\eps}\right)$ memory modes.

Similarly, $\ostrat_\eps$ plays as follows.
While the current energy level $j$ 
stays $<N$, at any state $\s'$, play like $\hat{\ostrat}$ at state
$(\s',j)$ in $\game'$.
Once the energy level reaches a value $\ge N$ (at any state)
for the first time, then play like $\optostrat$ forever
(where $\optostrat$ is the uniform optimal MD Minimizer strategy for 
$\Gain$ from \cref{subsec:N-for-SSG}.)
$\ostrat_\eps$ is $\eps$-optimal by \Cref{lem:approxenparvalueslessthanN}
and \Cref{lem:Game-N}.
While $\optostrat$ is MD and does not use any memory, $\ostrat_\eps$ still
needs to remember the energy level up-to $N$
while simulating $\hat{\ostrat}$,
and thus it uses
${\Ocompl}\left(\exp(\exp(\size{\game}^{\Ocompl(1)}))\cdot \log\lrc{1/\eps}\right)$ memory modes.

For rewards in binary, all bounds increase by one exponential via an encoding
of $\game$ into an exponentially larger but equivalent game with rewards in unary.
\end{proof}

No nontrivial lower bounds are known on the computational complexity of
approximating $\valueof[\game][\en(i)\,\cap \Eparity]{\s}$.
However, even without the parity part, the problem appears to be hard.
The best known algorithm for approximating the value of the energy objective
(resp.\ the dual termination objective) runs in $\NEXPTIME$ for SSGs with
rewards in unary \cite{BBEK:IC2013}.

As for lower bounds on the strategy complexity, 
$\eps$-optimal Maximizer strategies need at least an exponential number of
memory modes (for any $0 < \eps < 1$) even in maximizing MDPs.
This can easily be shown by extending the example in
\Cref{lem:maxmdp}(\Cref{lem:maxmdp-2}) and
\cite[Fig.~1 and Prop.~4]{Gimbert2011ComputingOS}
that shows the lower bound for the $\Gain$ objective.
First loop in a state with an unfavorable color to accumulate a
sufficiently large reward (depending on $\eps$) and then switch to the MDP in 
\cite[Fig.~1 and Prop.~4]{Gimbert2011ComputingOS} to play for $\Gain$
(since $\en(i)\,\cap\, \Eparity$ will be very close to $\Gain$ then).
Even the latter part requires exponentially many memory modes.

\section{Conclusion \& Extensions}\label{sec:conclusion}

We gave a procedure to compute $\eps$-approximations of the value of combined
energy-parity objectives in SSGs.
The decidability of questions about the exact values is open, but the problem
is at least as hard as the positivity problem for linear recurrence sequences
\cite[Section 5.2.3]{Piribauer:2021}.
%
Unlike almost surely winning Maximizer strategies which
require infinite memory in general \cite{MSTW2017,MSTW2021},
$\eps$-optimal strategies for either player require only finite memory
with at most doubly exponentially many memory modes.

An interesting topic for further study is whether these results can be extended
to other combined objectives where the parity part is replaced by something
else, i.e., energy-X for some objective X
(e.g., some other color-based condition like Rabin/Streett, or a quantitative
objective about multi-dimensional transition rewards).
While our proofs are not completely
specific to parity, they do use many strong properties that parity satisfies. 

\begin{itemize}
\item
  Shift-invariance of $\Eparity$ is used in several places,
  e.g.\ in \Cref{lem:W0-W1-W2} (and thus its consequences) and
  for the correctness of the constructions in \Cref{sec: Unfolding till N}.
\item
  We use the fact that $\Eparity$ goes well together with
  $\LimInf{>}{-\infty}$,
  i.e., the objective $\Gain = \LimInf{>}{-\infty}\, \cap\, \Eparity$ allows
  optimal FD strategies for Maximizer in MDPs; cf.~\Cref{lem:maxmdp}.
\item
  The submixing property of $\Oparity = \complementof{\Eparity}$ is used in \cref{thm:Gain 2-player games}
  to lift \Cref{lem:maxmdp} from MDPs to SSGs.
\end{itemize}

\newpage
\bibliographystyle{plainurl}
\bibliography{conferences,journals,ref1}

\newpage
\appendix
\section{Appendix for \Cref{sec:Gain}}\label{app:gain}

\begin{definition}\label{def:fix-fd}
Given an SSG $\game=\gametuple$ and a finite memory deterministic (FD)
strategy $\ostrat = \memstrattuple$ for Minimizer
let $\game_{\ostrat}$
be the maximizing
MDP with state space $\memconfset \x \states$
obtained by fixing Minimizer's
choices according to
$\ostrat$.
The transition rules $\movesto'$ in the derived MDP $\game_{\ostrat}$ are given as follows.
\begin{enumerate}
\item
  If $\s \in \zstates$
  for every $(\s,\s') \in \transition$, $\memconf \in \memconfset$,
  we have $(\memconf,\s)\, \movesto'\, (\memup(\memconf,(\s,\s')),\s')$,
  i.e., Maximizer determines the successor state and Minimizer updates its
  memory according to the observed transition. 
\item
  Similarly if $\s \in \rstates$
  for every $(\s,\s') \in \transition$, $\memconf \in \memconfset$
  we have
  $(\memconf,\s)\, \movesto'\, (\memup(\memconf,(\s,s')),\s')$
  and
  $\probp((\memconf,\s))((\memup(\memconf,(\s,s')),\s')) = \probp(\s)(\s')$,
  i.e., transition probabilities are inherited and Minimizer's memory is
  updated according to the observed transition.
\item
  If $\s \in \ostates$ then
  $(\memconf,\s)\, \movesto'\, (\memup(\memconf,(\s,\s')), \s')$ where $\s' = \memsuc(\memconf,\s)$,
  i.e., Minimizer chooses the successor state according to the
  strategy $\ostrat$ and updates its memory accordingly.
\end{enumerate}
The reward of each transition is the same as the reward of the 
transition in $\game$ from which it is derived. Similarly for the priorities
(aka coloring) of the states.
The ownership of the vertices $(\memconf, \s)$ in $\game_{\ostrat}$ is as follows.
If $\s \in \zstates$ then $(\memconf, \s)$ belongs to Maximizer.
If $\s \in \rstates$ then $(\memconf, \s)$ is also a chance vertex.
If $\s \in \ostates$ then $(\memconf, \s)$ also becomes a chance vertex (with
exactly one successor), since Minimizer's choice has been fixed.

In the dual case where a FD strategy $\zstrat$ for Maximizer is fixed,
we obtain a minimizing MDP $\game^{\zstrat}$. The construction is the same as
above, with the roles of Minimizer and Maximizer swapped.
\end{definition}

\section{Appendix for \Cref{sec:Computing N}}\label{app:mdp-approx}

\lemMDPN*

\begin{proof}
By \cref{lem:W0-W1-W2}(\cref{eq:nus+extra}) and
\cref{lem:exp_bound_term_value}, we have 
\[
  \valueof[\mdp][\Term(i)\,\cup\, \Oparity]{\s} - \valueof[\mdp][\Loss]{\s} \le  
\sup_{\zstrat}\Prob{\zstrat,\s}\lrc{\Term(i) \, \cap\, \complementof{\eventually\,W_1}} \le
\frac{c^i}{1-c}
\]
for all $i \ge h$ and $\s \in \states$.
To obtain a bound $N \ge h$ with $\frac{c^N}{1-c} \le \eps$, it suffices to choose
\[
  N \eqdef \max\lrc{h,\ceil{\log_c\lrc{\eps \cdot \lrc{1-c}}}}.
\]
We observe that
$\log_c\lrc{\eps \cdot \lrc{1-c}} = -\ln\lrc{\eps \cdot \lrc{1-c}} \cdot (-\ln(c)^{-1})$.\\
However, $-\ln(c) = -\ln(1-(1-c)) \ge (1-c)$.
Thus $\log_c\lrc{\eps \cdot \lrc{1-c}} \le
\ln\lrc{\frac{1}{\eps} \cdot \frac{1}{1-c}} \cdot \frac{1}{1-c}$.
For rewards in unary, by \cref{lem:exp_bound_term_value}, we have 
$1/(1-c) \in {\Ocompl}\left(\exp(\size{\mdp}^{\Ocompl(1)})\right)$ and
$h$ is only ${\Ocompl}(\exp\lrc{\size{\mdp}^{{\Ocompl}(1)}})$.
Thus $N \in {\Ocompl}\left(\exp(\size{\mdp}^{\Ocompl(1)})\cdot \log\lrc{1/\eps}\right)$.

Now consider the case where rewards are given in binary.
Following the proof of \cite[Lemma~3.9]{BBEK:IC2013},
the bounds are derived from the size of solutions of the constructed linear
program. While the MDPs in \cite{BBEK:IC2013} only consider unary rewards
from $\set*{-1,0,1}$,
one can extend it to the case where the rewards come from the set
$\set*{-R,\dots,0,\dots,R}$ in a natural way.
This affects the complexity of the above computed constants and thereby size of $N$.
More precisely, the proof of~\cref{lem:exp_bound_term_value}
can be split into three steps.
Firstly, given an MDP $\mdp$ construct a new ``rising''
MDP $\mdp'$. Then from this derived $\mdp'$,
construct a linear program.
From the solutions of constructed LP,
compute the required $c$ and $h$.
We evaluate the effect of having non-unary rewards in each of these steps.

When rewards are given in unary, the resulting $\mdp'$ has overall size
$\size{\mdp'} \leq 10 \size{\mdp}^4$. More exactly, $\size{\states'} \leq
10*\size{\states}^3*\lrc{\size{\states}+\size{\transition}}$ and similarly for
$\size{\transition'}$. When the rewards are given in binary, the construction results in an additional $R^2$ factor. So the resulting $\mdp'$ is pseudo-polynomially big when compared to $\mdp$ in our case.

The constructed LP (cf.~\cite[Fig.1]{BBEK:IC2013}) has $\states' + 2$ variables ($z_\s$ for each state, $x$ for the mean payoff and $\xi$ for converting the constraint $x > 0$ to $x \geq \xi $). Moreover all variables can be assumed non-negative. The number of constraints is bounded by $\transition' + 1$. Furthermore all the constants appearing in the constraints are either constants in the original MDP $\mdp$ or 1 or 0.

Finally, from an optimal solution of the LP $\tuple{z_\s,x,\xi}$ one can
compute $\exp\lrc{\frac{-x^2}{2\cdot(z_{\max}+x+R)^2}}$ and to get $c$, then
take a rational over-approximation and also take $h$ as $\ceil{z_{\max}}$ where $z_{\max} \eqdef \max_{\s \in \states'}z_\s - \min_{\s \in \states'}z_\s$. The only difference compared to the unary rewards case here is that the one step change of the submartingale is bounded by $z_{\max}+x+R$ instead of $z_{\max}+x+1$.

From the complexity point of view, both the construction of the LP and
the computation from its optimal solutions aren't affected by
changes in the rewards, i.e., the
previous bounds for $c$, $h$ and $N$ in terms of $\size{\mdp'}$ still hold.
In particular, $c \in
{\Ocompl}\left(\exp\lrc{1/2^{\size{\mdp'}^{{\Ocompl}(1)}}}\right)$, $h \in
{\Ocompl}(\exp\lrc{\size{\mdp'}^{{\Ocompl}(1)}})$ and thus
$N \in {\Ocompl}\lrc{\exp\lrc{\size{\mdp'}^{{\Ocompl}(1)}}\cdot
  \log\lrc{1/\eps}}$ by
\cite[Claim~6]{BBEK:IC2013}.

While previously, $\mdp'$ is only polynomially larger
than $\mdp$, introducing binary rewards blows up the
construction (cf.~\cite[Appendix A.2]{BBEK:IC2013}).
As a result we have that
$\size{\mdp'} \in {\Ocompl}\lrc{2^{\size{\mdp}^{{\Ocompl}(1)}}}$.
Therefore $N$ can be doubly exponential in the size of the original MDP
$\mdp$,
i.e.,
$N \in {\Ocompl}\left(\exp(\exp(\size{\mdp}^{\Ocompl(1)}))\cdot \log\lrc{1/\eps}\right)$.
\end{proof}

\section{Appendix for \Cref{sec: Unfolding till N}} \label{app:unfolding}
\begin{definition}[Definition of $\game\lrb{N}$] We present formally the definition of the game $\game\lrb{N}$, which unfolds the energy level in $\game$ till $N$
  $$\game\lrb{N} = \tuple{\states\lrb{N},\lrc{\zstates\lrb{N},\ostates\lrb{N},\rstates\lrb{N}},\transition\lrb{N}, \probp\lrb{N}}$$ where
  \begin{enumerate}
    \item $\states\lrb{N}\eqdef \states \times \set*{0,\ldots,N+R} \uplus \set*{\s_{\mathrm{win}},\s_{\mathrm{lose}}}$, the set of states is the tuple with the game state and energy level until $N+R$ as the maximum change in a single step is $R$ and since we are only interested in energy levels $\le N$, it suffices to consider till $N+R$.
    \item $\xstates\lrb{N}\eqdef \xstates \times \set*{1,\ldots, N}$, both players control their respective states until energy level $N$. Every state with energy $> N$ becomes a chance node. Consequently,
    \item $\rstates\lrb{N}\eqdef \rstates \times \set*{1,\ldots, N} \cup \states \times \set*{0,N+1,\dots,N+R} \cup \set*{\s_{\mathrm{win}},\s_{\mathrm{lose}}}$, since the Maximizer loses when the energy level becomes $\le 0$, we make these states as a chance vertex which go to a losing loop.
    \item $\transition\lrb{N}$, $\probp\lrb{N}$ 
    \begin{enumerate}
        \item For $0 < i \leq N$, $\lrc{\s,i} \movesto \lrc{\s',\max(0,j)}$ iff $\s \energymove{j-i} \s' \in \transition$, this is just simulating the transitions of the game until energy level $N$ and taking care of border cases. When energy drops below $0$, we move to level $0$ as there is no difference. When it shoots above $N$, it cannot go beyond $N+R$ and thus the transition is well defined.
        \item If $\s \in \rstates$ above, then the probability is carried over.
        \item $\lrc{\s,0} \movesto \s_{\mathrm{lose}}$ with probability 1.
        \item $\lrc{\s,N+k} \movesto \s_{\mathrm{win}}$ with probability $\valueof[\game][\en(N+k) \, \cap \, \Eparity]{\s}$ and with remaining probability moves to $\s_{\mathrm{lose}}$ for $1 \leq k \leq R$
        \item $\s_{\mathrm{lose}} \movesto \s_{\mathrm{lose}}$ with probability 1. Similarly for $\s_{\mathrm{win}}$.
    \end{enumerate}
\end{enumerate}
\end{definition}

\end{document}